\documentclass{article}

%comment below after draft is corrected
%\linespread{1.5}

\usepackage {graphicx,latexsym}
\usepackage{amsmath,amsthm}
\usepackage{amsfonts}
\usepackage{amssymb}
\usepackage[utf8]{inputenc}
\usepackage{authblk}
\usepackage{mathtools}
\usepackage{xcolor}
\usepackage{braket}
\usepackage{enumerate}
\usepackage{caption}
\usepackage{subcaption}

%\usepackage{setspace}
%\doublespacing
% or:
%\onehalfspacing

\newcommand{\R}{{\mathbb R}}

\newtheorem{theorem}{Theorem} 
 
\newtheorem{lemma}{Lemma} 
\newtheorem{corollary}{Corollary} 
\newtheorem{definition}{Definition} 

\DeclareMathOperator{\sgn}{sgn}

\begin{document}

\title{
A discretization of Holst's action for general relativity
}
\author{
Carlos E. Beltrán$^{1,2}$
\footnote{Corresponding Author: \ttfamily carlosbmunificacion@comunidad.unam.mx}
}
\author{
José A. Zapata$^1$
}
\affil{
$^1$ Centro de Ciencias Matemáticas, 
Universidad Nacional Autónoma de México, 
Morelia, Michoacán C.P. 58089, México. \\
$^2$ Instituto de Física, 
Universidad Nacional Autónoma de México, 
Ciudad de México C.P. 04510, México. 
}

\date{}
\maketitle

\begin{abstract} 
We present a simplicial model for gravity written in terms of a discretized Lorentz connection and a discretized tetrad field. The continuum limit of its action is Holst's action for general relativity. With the intention of using it to construct spin foam modes for quantum gravity, we write two other equivalent models written in terms of a discretized and constrained $B$ field. 
The differences between our model and existing models are most likely inessential in the sense that a quantization would lead to equivalent quantum theories in the Wilsonian continuum limit. 
Nevertheless, we mention two features leading to possible advantages: 
Curvature degrees of freedom are described at the level of each 4-simplex. 
Our model offers a picture of bulk geometry leading to actions for matter couplings that split as a sum over 4-simplices. 
\end{abstract}

\tableofcontents

\section{Motivation}% Yo
\label{Motivation}
%1) Motivation, introduction, brief description and plan of the article 

The spin foam approach to quantum gravity \cite{Perez1, RV} builds a quantum theory starting with a version of classical general relativity. Up to now, the theory sits on a discrete structure replacing spacetime and introducing a cutoff. The hope is that a continuum limit, in the sense of Wilsonian renormalization, exists; if it does, the resulting theory would be compatible by design with kinematical canonical loop quantization in the continuum \cite{LQ}. 

It is often considered that the Holst action \cite{Holst, EPRL} for general relativity is the starting point for the construction of spin foam models for quantum gravity \cite{Perez1}. 
In order to construct it, one adds 
an additional term to the tetradic Palatini action 
\begin{equation}\label{holst}
	S_{H}(e,\omega)=\dfrac{1}{2}\int_{M}\epsilon_{IJKL}e^{I}\wedge e^{J}\wedge F^{KL}+\dfrac{1}{\gamma}\int_{M}e_{I}\wedge e_{J}\wedge F^{IJ},
\end{equation}	
where $\gamma\in\mathbb{R}\cup\left\{\infty\right\}$ is called the Barbero-Immirzi parameter, $\omega$ is a $SO(3,1)$ connection 1-form, $F$ is its curvature, $e^{I}$ is a tetrad field (a 1-form with values on the Minkowski space $\mathcal{M}$) and $\epsilon_{IJKL}$ is the standard skew-symmetric Levi-Civita tensor in $\mathcal{M}$. The first term in \eqref{holst} is the standard Palatini action. The second is an additional term that does not affect the field equations in vacuum \cite{EPRL}.%
% Added in response to referee's Comment 2. 
\footnote{
Including topological terms would certainly be desirable. We briefly address the issue in the concluding section.}

A more detailed account is that the EPRL spin foam model for quantum gravity (proposed by Engle, Pereira, Rovelli and Livine) \cite{EPRL} is constructed as a quantization of a classical theory on a discretized spacetime, and the action for that classical theory resembles the Holst's action for general relativity, but it is not a direct discretization of it: 
	The main variable of the models, besides a discretized connection, is a Lie algebra valued discretized 2-form, called a $B$ field, instead of the tetrad $e$ used in the Holst action. The argument is that in the continuum a tetrad field $e$ may be traded for an appropriately constrained $B$ field. 
	What is actually done is that the $B$ field together with the mentioned constraints and the connection that live in the continuum are transformed into classical fields and constraints living on a discretized spacetime. They are later quantized in order to obtain a spin foam model.

The route back to a classical field theory does not precisely follow the path suggested by retracing the path followed in the quantization procedure. This was never a requirement; we only mention it as a curious fact. 
The classical limit is studied first, and, at a second stage, the continuum/macroscopic limit is considered: (i) The asymptotic limit of the vertex amplitude of the model (in the sense of large quantum numbers) is governed by the action of Regge calculus \cite{Regge}. (ii) The continuum/macroscopic limit of Regge's action is known to be the Einstein-Hilbert action. There are two subtleties at this point: First, contact is made with an alternative version of Regge calculus called ``area-angle'' Regge calculus \cite{AAR}. Second, at the level of computer simulations, it has been observed that the continuum limit needs a simultaneous scaling of the Planck length and the Immirzi parameter \cite{ESFM1}.

In this article, we present a discretization of general relativity in terms of a discretized connection and a discretized tetrad field $e$. We prove that the macroscopic/continuum limit of the action of our model is Holst's action for general relativity.%
% Added in response to referee's Comment 1.
\footnote{The model is also defined for degenerate tetrad fields, 
and the mentioned proof of convergence of the action holds also in the case of degenerate tetrad fields.}
%%%%%%%%%%

We also present another model, written in terms of a discretized and constrained $B$ field and a discretized connection, which in the nondegenerate sector is equivalent to the first model. Two versions of the geometricity conditions on the discretized $B$ field are considered.%
% Added in response to referee's Comment 1.
\footnote{The $B$-field models are defined for degenerate $B$ fields. Equivalence to the tetrad field model has only been proven for nondegenerate $B$-fields and tetrad fields.} 
%%%%%%%%%%
In the first, the main condition is entirely analogous to the so-called ``linear simplicity constraint,'' a key ingredient of the EPRL model (and of its classical prequel) \cite{EPRL}. Our proof of equivalence, however, needs supplementary constraints. 
We have reasons to expect that quantization of our model could be done emulating the EPRL strategy but in the context of our curved 4-simplices. 
The second version of the geometricity conditions is designed to be used in the quantization strategy called effective spin foam models, championed by Dittrich and collaborators \cite{ESFM1}. We think that our model could play the role now played by area-angle Regge calculus but with possible advantages. Certainly, our model provides a different point of view in several of the issues of current interest in the construction and study of models of this type. 

%
%The geometric interpretation of the histories in our model follows directly from the discretized tetrad field which, in the nondegenerate sector, consistently associates a basis of Minkowski's space to each corner cell of each 4-simplex. In contrast with the geometric interpretation of the classical EPRL model, this geometric picture talks about the bulk geometry of the 4-simplices instead of their boundary geometry. If we study the boundary geometry, we find a picture that is not the twisted geometry picture of Freidel and Speziale \cite{FS}. The origin of the difference is that the simplest piece of the histories of our model is associated to the corner cells of the 4-simplices and not to the 4-simplices them selves. For a detailed explanation see Subsection 3.3. 

Given that the EPRL spin foam model is already constructed, that there is already a framework for effective spin foam models, and that they are related to classical simplicial models for general relativity, it is natural to ask what can the contribution of our new model be. 
We address that issue later in this article, but here we just mention how two differences between our model and the usual models based on a Regge geometry --with flat simplices, and where the direct picture of spacetime geometry is associated to the boundary of the simplices-- can be advantageous. 
Only very recently have the first spin foam simulations involving more than one simplex been performed \cite{ESFM1, SFmoreThan1simplex1, SFmoreThan1simplex2}. Since the geometry of the current spin foam models is based on flat 4-simplices, the curvature degrees of freedom have barely been tested in spin foam models for quantum gravity \cite{Livine}. In contrast, if a quantum version of our model is constructed, the study of a single 4-simplex (which we will also call spacetime atom) would necessarily include curvature degrees of freedom. This feature comes from our use of Reisenberger's discretization \cite{Rei1}. 
In contrast with most most models the geometric picture of a spacetime atom is directly associated to the interior of the 4-simplex. Although this difference can be thought of as a simple reorganization of the degrees of freedom with respect to the discretization, having a ``bulk geometry picture'' gives us the possibility of regularizing the coupling of gravity with matter fields in a simple way, which is not directly available to other models.

In section \ref{Model} we present the discrete model, show that in the continuum limit the discrete action converges to the Holst action, and deduce its field equations. In Section \ref{B} we write our model as a discrete constrained BF theory in two different ways. 
In Section \ref{NewPerspectives} we describe with some detail a few perspectives offered by our model which distinguish it from available models. 
Finally, in Section \ref{Summary} we give a brief summary of the main results 
%%%% Added in response to referee's Comment 2.
and briefly address the issue of considering a more general action including topological terms.
%%%

\section{The model}% Carlos
\label{Model}

\subsection{Discretization}
\label{discretization}

The classical discrete model presented here uses a cellular decomposition of the spacetime manifold $M$ introduced in Reisenberger in \cite{Rei1}. Start by considering an oriented simplicial complex, denoted by $\triangle$, and its corresponding dual cellular decomposition $^{*}\triangle$ as defined in \cite{Thi}. Following the notation given in \cite{Rei1}, the Greek letters $\nu, \tau, \sigma$ will denote a 4-simplex, a 3-simplex and a 2-simplex, respectively. Furthermore, 0-simplices of a given 4-simplex will be numbered from $1$ to $5$ and generically be denoted by the letters $P,Q,R,\dots$. The notation $\mu<\rho$ will mean that $\mu$ is a subcell or subsimplex of $\rho$. The notation $C_{\rho}$ will denote the barycenter of $\rho$. 

The discretization considered here is designed to contain elements of the intersection between $\triangle$ and $^{*}\triangle$. We subdivide each 4-simplex $\nu<\triangle$ into subcells called corner cells. Given a 4-simplex $\nu<\triangle$ and a vertex $P$ in $\nu$, we define the corner cell $c_{P}$ in $\nu$ associated with $P$ as the intersection between $\nu$ and the 4-dimensional cell in $^{*}\triangle$ dual to $P$. In this way, $c_{P}$ is a hypercube with vertex set composed by the points $P$, $C_{\nu}$ and the barycenters of all the subsimplices of $\nu$ that have $P$ as one of its vertices. Therefore, each 4-simplex $\nu<\triangle$ is subdivided into five corner cells, one for each vertex $P$. See Figure \ref{dis}.

Now, let us consider a 2-simplex $\sigma$ and a 4-simplex $\nu<\triangle$ containing it $\sigma<\nu$. We define the wedge $s(\sigma\nu)$ associated with $\sigma$ in $\nu$ as the intersection between the 4-simplex $\nu$ and the 2-cell in $^{*}\triangle$ dual to $\sigma$. The wedge $s(\sigma\nu)$ is a plane having as vertices the barycenters of $\nu$, $\sigma$ and the barycenters of the two 3-simplices $\tau_1,\tau_2<\nu$ that share $\sigma$ (as seen in Figure \ref{dis}). In four dimensions, there are 6 wedges in the boundary of each corner cell $c_{P}$, and 10 wedges inside each 4-simplex (without considering orientation). Inside a corner cell, a given wedge $s$ has exactly 1 dual wedge; it is the wedge that intersects $s$ only at the baricenter of the 4-simplex. Inside a 4-simplex, for each wedge there are 3 other wedges that intersect it at a single point.

Of special importance for our model are the lines $l(\nu\tau)$ resulting from the intersection between $\nu$ and the 1-cell dual to the 3-simplex $\tau<\nu$. We consider them as oriented lines starting at $C_{\nu}$ and ending at the barycenter of $\tau$. In addition, we will denote as $r(\tau\sigma)$ the oriented lines starting at the barycenter of $\tau<\nu$ and ending at the barycenter of the 2-simplex $\sigma<\tau$. 
The boundary of an oriented wedge can be written as $\partial s=l_1+r_1-r_2-l_2$. Oriented wedges are in one to one correspondence with ordered pairs $(l_1, l_2)$. 

\begin{figure}
	\centering
	\includegraphics[width=0.9\textwidth]{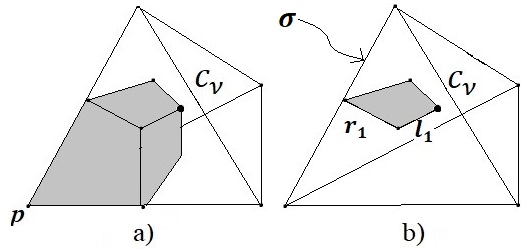}
	\caption{A corner cell and wedges in three dimensions. Panel (a) shows $c_P$, the corner cell associated with vertex $P$. Panel (b) shows $s$, the wedge dual to $\sigma$; lines $l_1, r_1$ are part of $\partial s$.}\label{dis}
\end{figure}

Now we describe the fields of our discrete model. 
Consider that there are copies of the Minkowski space $\mathcal{M}$, the Lie group $SO(3,1)$ and the Lie algebra $so(3,1)$ attached to the barycenter of every 4-simplex $\nu\in\triangle$. 
An analogous structure is also associated to the barycenter of each 3-simplex and each 2-simplex in $\nu$. 
In our conventions, capital letters from the middle of the alphabet $I, J, \ldots$ are Minkowski space indices, and lowercase letters from the middle of the alphabet $i, j, \ldots$ are Lie algebra indices. 

A discrete tetrad field $e$ is the association of a Minkowski vector $e_{l}^{I}$ to each oriented line $l(\nu\tau)$. For each corner cell in $\nu$, there are four vectors that when the field is not degenerate form a basis of the copy of Minkowski's space attached to $C_\nu$. 
Instead of a discrete connection 1-form $\omega$, we use a collection of parallel transport maps. We associate an element $h_{l}\in SO(3,1)$ with each oriented line $l(\nu\tau)$, representing the parallel transport along that curve. Analogously, we associate an element $k_{r}\in SO(3,1)$ with each oriented line $r(\tau\sigma)$. 
These are the basic fields of our model. 

Notice that the tetrad field induces a (pseudo) metric in $C\nu$ associated to each corner cell of a 4-simplex 
\begin{equation}\label{gEta}
	g_{l_1 l_2}(c) = e_{l_1}^I e_{l_2}^J \eta_{IJ}  \ \mbox{ for any } \ l_1, l_2 < c . 
\end{equation}
In addition, the metrics of neighboring corner cells $c, c'$ are clearly compatible in the sense that they if two vectors contained in the 3-dimensional subspace corresponding to $c \cap c'$ are expanded in the basis associated to $c$ or in the basis associated $c'$, the calculation of their inner product in the corresponding metrics $g(c)$ or $g(c')$ would agree. 

	Unlike what happens in other discrete models, 4-simplices here carry curvature. The discrete object measuring curvature in a 4-simplex $\nu$ is the parallel transport along the boundaries of its wedges $g_{\partial s}:=h_{l_2}^{-1}k_{r_{2}}^{-1}k_{r_1}h_{l_1}$.

\subsection{The action}% Carlos
\label{action}

The action for the model can be written as a sum over 4-simplices 
\begin{equation}\label{S}
	S_{\triangle}(e,h,k)=
	\sum_{\nu<\triangle} \ \sum_{l, l', s<\nu}
	\sgn(l,l',s) \ e_{l}^{K}e_{l'}^{L}\left(T_{i}\right)^{JI}P_{IJKL} \ \theta_{s}^{i},
\end{equation}	
where a few conventions need to be clarified: 
In the sum over wedges each orientation of a wedge $s$ is considered independently. 
The sign $\sgn(l,l',s)$ is 
determined trading $s$ for its corresponding ordered pair of oriented lines $(l'',l''')(s)$; 
it is positive or negative according to the orientation of the ordered four tuple $(l, l', l'',l''')$.  
$\left(T_{i}\right)_{~~J}^{I}$ is a basis for the Lie algebra, and 
$\theta_{s}^{i} \left(T_{i}\right)_{~~J}^{I}$ is a discretization of the curvature $F_{~~J}^{I}$, where $\theta_{s}^{i}:=\text{tr}\left[T^{i}g_{\partial\bar{s}}\right]$. 
Finally, $P_{IJKL}:=\epsilon_{IJKL}-\dfrac{1}{2\gamma}\epsilon_{IJMN}\epsilon_{~~~~KL}^{MN}$. 

\begin{definition}[e-field model]
	The action $S_{\Delta}(e, h, k)$ defined in \eqref{S} determines the $e$-model. 
\end{definition}

There is a more explicit formula for $S_\nu$, the contribution to \eqref{S} from a single 4-simplex. 
The lines $l<\nu$ are labeled by numbers from 1 to 5, and written as $l_P$. Wedges are denoted by $s_{PQ}$. We simplify notation for the fields writing $e_{l_P}= e_P$ and $\theta_{s_{RS}} = \theta_{RS}$. Thus, we can write 
\begin{equation}\label{dac2}
	S_{\nu}=\sum_{T}\epsilon^{PQRST}e_{P}^{K}e_{Q}^{L}\left(T_{i}\right)^{JI}P_{IJKL}\theta_{RS}^{i}.
\end{equation}	
Notice that the sum over $T$ corresponds to a sum over corner cells in $\nu$. 

%% Added in response to referee's Comment 1. 
We would like to call attention to the fact that the model is defined for any discrete tetrad field, including degenerate ones. The continuum limit and the field equations presented in the following subsections also hold for general tetrad fields. 
%%%%%%

\subsection{The continuum limit of the action}
\label{ContinuumLimit}

In this section, we show that the discrete action \eqref{S} converges in the continuum limit to Holst's action \eqref{holst}. 
This subsection follows a minor adaptation of the procedure used by Reisenberger to study the continuum limit of his model \cite{Rei1}.

Let us rewrite \eqref{holst} as 
\begin{equation}\label{holst2}
	S_{H}\left(e,\omega\right)=\int_{M}\Sigma_{i}(e)\wedge F^{i}(\omega), 
\end{equation}	
where 
$\Sigma_i:=P_{IJKL}e_{\alpha}^{K}e_{\beta}^{L}\left(T_{i}\right)^{JI}dx^{\alpha}\wedge dx^{\beta}$, and $F^{i}:=\left(T^{i}\right)_{IJ}F_{~~~\mu\nu}^{IJ}dx^{\mu}\wedge dx^{\nu}$.

\begin{definition}\label{map}
	The map $\Omega_{\triangle}:\left(\omega,e\right)\rightarrow\left(h_l,k_r,e_l\right)$ from continuum fields to simplicial fields defined on the discretization of $M$ is given by the following expressions:
	\begin{align}\label{map2}
		h_{l}&:=Pexp\left(\int_{l}\omega\right)\\
		k_{r}&:=Pexp\left(\int_{r}\omega\right)\\
		e_{l}^{I}&:=\int_{l}U_{~~J}^{I}(x,C_{\nu})e^{J}(x).
	\end{align}	
\end{definition}	 
\noindent where $Pexp$ is the path-ordered exponential and $U_{~~J}^{I}(x,C_{\nu}):=Pexp\left(\int_{C_{\nu}}^{x}\omega\right)$ is the parallel transport of Minkowski vectors along a straight line, from $x$ to $C_{\nu}$. 

In what follows, we will work with compact spacetimes or compact regions of spacetime. Then simplicial decompositions are finite. To prove convergence, we use a sequence of triangulations which become finer everywhere at the same rate. This idea is formalized by the following definition. 
\begin{definition}\label{unif}
	Let $\left\{\triangle_n\right\}_{n=0}^{\infty}$ be a sequence of simplicial decompositions of a compact manifold $M$, and $g_0$ a fixed positive-definite metric constant on each $\nu<\triangle_0$. We say that $\left\{\triangle_n\right\}_{n=0}^{\infty}$ is a uniformly refining sequence if 
	\begin{enumerate}[a)]
		\item $\triangle_0$ is a finite simplicial decomposition of $M$.
		\item $\triangle_{n+1}$ is a finite refinement of $\triangle_{n}$, $\forall n$.
		\item $\lim_{n\rightarrow\infty}r_n=0$, with $r_n$ the maximum of the $g_0-$radii $r_\nu$ of the 4-simplices $\nu<\triangle_n$.
		\item $r_{\nu}^{4}$ divided by the $g_0$ 4-volume of $\nu$ is uniformly bounded as $n\rightarrow\infty$, for all 4-simplices $\nu<\triangle_{n}$.
	\end{enumerate}	 
\end{definition}

We also need two preliminary lemmas. The first is an elementary calculus result.
\begin{lemma}\label{lemma1}
	Let $x,y,x_0,y_0\in\mathbb{R}$ such that 
	\begin{align*}
		\left|x-x_0\right|&<\text{min}\left(1,\dfrac{\epsilon}{2\left(\left|y_0\right|+1\right)}\right)\\
		\left|y-y_0\right|&<\dfrac{\epsilon}{2\left(\left|x_0\right|+1\right)},
	\end{align*}
	for some $\epsilon>0$. Then $\left|xy-x_0y_0\right|<\epsilon$.	
\end{lemma}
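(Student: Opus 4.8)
The plan is to use the standard add-and-subtract trick to convert the product difference into a sum of two terms, each tailored to one of the two hypotheses. First I would write
\[
xy - x_0 y_0 = x(y - y_0) + (x - x_0) y_0,
\]
and apply the triangle inequality to obtain $|xy - x_0 y_0| \le |x|\,|y - y_0| + |y_0|\,|x - x_0|$. The strategy is then to show each summand is less than $\epsilon/2$.

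For the second summand the bound is immediate: since $|x - x_0| < \epsilon/(2(|y_0| + 1))$ and $|y_0|/(|y_0| + 1) < 1$, I get $|y_0|\,|x - x_0| < \epsilon/2$. For the first summand, the whole point of the auxiliary constraint $|x - x_0| < 1$ (the reason the first hypothesis is a minimum of two quantities) is to control the otherwise-unbounded factor $|x|$: by the triangle inequality $|x| \le |x_0| + |x - x_0| < |x_0| + 1$, and combining this with $|y - y_0| < \epsilon/(2(|x_0| + 1))$ yields $|x|\,|y - y_0| < (|x_0| + 1)\cdot \epsilon/(2(|x_0| + 1)) = \epsilon/2$. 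Adding the two halves gives $|xy - x_0 y_0| < \epsilon$.

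There is no serious obstacle here; this is the familiar continuity-of-multiplication estimate, and the excerpt itself flags it as ``an elementary calculus result.'' The only thing to get right is the pairing of hypotheses with terms: the bound on $|x - x_0|$ carries both the factor $1$ (used to bound $|x|$) and the factor $|y_0|$ (used to kill the $(x - x_0)y_0$ term), while the bound on $|y - y_0|$ is matched against the estimate $|x_0| + 1$ on $|x|$. I note that the symmetric decomposition $xy - x_0 y_0 = (x - x_0) y + x_0 (y - y_0)$ would instead require controlling $|y|$, which the given hypotheses do not bound as cleanly, so the decomposition above is the one that fits the stated constants.
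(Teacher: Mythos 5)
Your proof is correct: the decomposition $xy - x_0y_0 = x(y-y_0) + (x-x_0)y_0$, the bound $|x| < |x_0|+1$ from the auxiliary constraint $|x-x_0|<1$, and the two $\epsilon/2$ estimates are all exactly right. The paper itself gives no proof of this lemma (it is merely flagged as ``an elementary calculus result''), and your argument is the standard one that statement implicitly refers to, with the hypotheses matched to the terms precisely as the stated constants demand.
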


Given a k-subcell c of a 4-simplex $\nu$, let us define the characteristic tensor of $c$ as
\begin{equation}\label{chac}
	t_{c}^{\alpha_1,\dots,\alpha_k}:=\int_{c}dx^{\alpha_1}\wedge\dots\wedge dx^{\alpha_k} , 
\end{equation}	
where $x^{\alpha}$ are linear coordinates on $\nu$. Expression \eqref{chac} is useful for expressing the result of the following lemma.
\begin{lemma}\label{lemma2}
	Let $h$ be a continuous k-form on a compact manifold $M$, $\left\{\triangle_n\right\}_{n=0}^{\infty}$ be a uniformly refining sequence of simplicial decompositions of $M$, and $g_0$ be a positive definite metric that is constant on each 4-simplex $\mu<\triangle_{0}$. Then for all $\epsilon>0$ there exists $N>0$ such that for any $n>N$ and any k-subcell $c<\nu<\triangle_{n}$ we have 
	\begin{equation}\label{aux1}
		\left|\int_{c}h-h\left(C_{\nu}\right)_{\alpha_1,\dots,\alpha_k}t_{c}^{\alpha_1,\dots,\alpha_k}\right|<\epsilon r_{\nu}^{k},
	\end{equation}
	where $r_{\nu}$ is the $g_0$ radius of $\nu$ and $h(C_{\nu})_{\alpha_1,\dots\alpha_k}$ are the components, in linear coordinates, of $h$ evaluated in $C_{\nu}$.	
\end{lemma}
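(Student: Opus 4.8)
The plan is to rewrite the bracketed quantity as a single integral over $c$ of the difference between the coefficient functions of $h$ evaluated at the running point $x$ and at the barycenter $C_\nu$, and then to control this integral by combining the uniform continuity of $h$ with a purely geometric bound of the form $|t_c^{\alpha_1\dots\alpha_k}|\le C\,r_\nu^k$. Lemma \ref{lemma1} will not be needed here; it is the product estimate used later in the convergence theorem, whereas Lemma \ref{lemma2} is the ``replace $h$ by its value at the barycenter'' estimate.

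First I would fix, on each 4-simplex $\mu<\triangle_0$, the linear coordinates $x^\alpha$ in which $g_0$ is constant. Since $\triangle_n$ refines $\triangle_0$, every $\nu<\triangle_n$ lies in a unique such $\mu$ and inherits those coordinates, so only finitely many coordinate systems appear over the whole sequence. Writing $h=h_{\alpha_1\dots\alpha_k}(x)\,dx^{\alpha_1}\wedge\dots\wedge dx^{\alpha_k}$ with ordered indices, the definition \eqref{chac} of $t_c$ gives at once
\[
\int_c h - h(C_\nu)_{\alpha_1\dots\alpha_k}\,t_c^{\alpha_1\dots\alpha_k} = \int_c \big(h_{\alpha_1\dots\alpha_k}(x)-h_{\alpha_1\dots\alpha_k}(C_\nu)\big)\,dx^{\alpha_1}\wedge\dots\wedge dx^{\alpha_k}.
\]

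Next I would invoke compactness. Because $M$ is compact and $h$ is continuous, each coefficient $h_{\alpha_1\dots\alpha_k}$ is uniformly continuous with respect to the $g_0$-distance $d_{g_0}$ (there being finitely many charts), so for every $\delta>0$ there is $\rho>0$ with $d_{g_0}(x,y)<\rho\Rightarrow |h_{\alpha_1\dots\alpha_k}(x)-h_{\alpha_1\dots\alpha_k}(y)|<\delta$ for all index strings. By hypothesis (c) of Definition \ref{unif} one has $r_n\to 0$, so I can pick $N$ with $r_\nu\le r_n<\rho/2$ for all $n>N$; then every $x\in c\subset\nu$ satisfies $d_{g_0}(x,C_\nu)<\rho$ and the integrand above is bounded in modulus by $\delta$. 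For the geometric factor, each $c$ is a flat $k$-cell contained in the $g_0$-ball of radius $r_\nu$ enclosing $\nu$, hence its $g_0$ $k$-area is at most $\omega_k r_\nu^k$; since $g_0$ is constant on each of the finitely many $\mu<\triangle_0$, the coordinate components satisfy $|t_c^{\alpha_1\dots\alpha_k}|\le C\,r_\nu^k$ with a single constant $C$ depending only on $g_0$ and $k$. Combining the two bounds,
\[
\Big|\int_c h - h(C_\nu)_{\alpha_1\dots\alpha_k}\,t_c^{\alpha_1\dots\alpha_k}\Big| \le \delta \sum_{\alpha_1<\dots<\alpha_k}|t_c^{\alpha_1\dots\alpha_k}| \le \delta\,C'\,r_\nu^k,
\]
and choosing $\delta=\epsilon/C'$ at the outset yields the claim.

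The only genuine obstacle is making every estimate \emph{uniform} over the infinitely many simplices of the refining sequence, and this is exactly what the finiteness of $\triangle_0$ (finitely many linear charts with constant metrics) and the uniform continuity coming from compactness of $M$ provide. I expect hypothesis (d) of Definition \ref{unif} to be unnecessary for this lemma, since we only require an \emph{upper} bound on $|t_c|$, for which it suffices that $c$ sits inside a ball of radius $r_\nu$; condition (d), a lower control on volumes, should instead enter the convergence theorem proper.
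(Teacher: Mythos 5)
Your proof is correct. There is nothing in the paper itself to compare it against: the authors do not prove Lemma \ref{lemma2} but defer to Reisenberger \cite{Rei1}, and your argument is essentially the standard one used there --- rewrite the discrepancy as $\int_c\bigl(h_{\alpha_1\dots\alpha_k}(x)-h_{\alpha_1\dots\alpha_k}(C_\nu)\bigr)\,dx^{\alpha_1}\wedge\dots\wedge dx^{\alpha_k}$, control the integrand uniformly via compactness of $M$ together with the finitely many constant-metric linear charts inherited from $\triangle_0$, and bound the characteristic tensors by $C\,r_\nu^k$ because each (flat) cell sits inside a $g_0$-ball of radius $r_\nu$. Your two structural observations are also accurate: Lemma \ref{lemma1} and condition d) of Definition \ref{unif} play no role here, entering only in the proof of Theorem \ref{theo1}, where approximations of $\Sigma$ and $F$ are multiplied and the per-simplex errors of order $\epsilon r_\nu^4$ must be summed against a volume bound.
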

The proof of Lemma \ref{lemma2} can be found in \cite{Rei1}. 

Now we are prepared to prove the desired convergence result. 
\begin{theorem}\label{theo1}
	Let $M$ be a four dimensional compact oriented manifold and $\left\{\triangle_{n}\right\}_{n=0}^{\infty}$ a sequence of uniformly refining simplicial decompositions of $M$. Additionally, let $(\omega, e)$ be fields of the Holst action (with $e$ continuous and $\omega$ continuously differentiable) and $S_{n}(e,\omega)$ the evaluation of the discrete action \eqref{S} on the simplicial fields $(h_l,k_r,e_l)_{n}=\Omega_{\triangle_n}(e,\omega)$ defined on $\triangle_n$ using Definition \ref{map}. Then 
	\begin{equation}\label{conlim}
		\lim_{n\rightarrow\infty}S_{\triangle_n}(\omega,e)=S_{H}(\omega,e) .
	\end{equation}	
\end{theorem}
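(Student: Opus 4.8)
The plan is to prove convergence simplex by simplex and then control the accumulated error. Since the Holst action in the form \eqref{holst2} is the integral of the $4$-form $\Sigma_i\wedge F^i$, it decomposes as $S_H=\sum_{\nu<\triangle_n}\int_\nu \Sigma_i\wedge F^i$, matching the outer sum over $4$-simplices in \eqref{S}. It therefore suffices to show that the single-simplex contribution $S_\nu$ of \eqref{dac2} approximates $\int_\nu \Sigma_i\wedge F^i$ with an error that is negligible after summation.

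First I would replace the discrete fields by their continuum leading terms at the barycenter $C_\nu$. For the tetrad, applying Lemma \ref{lemma2} to the $1$-form $x\mapsto U(x,C_\nu)e(x)$ and using that $U(x,C_\nu)\to\mathrm{Id}$ as the cell shrinks gives $e_l^I=e^I_\alpha(C_\nu)\,t_l^\alpha+o(r_\nu)$, with $t_l^\alpha$ the characteristic tensor \eqref{chac} of the line $l$. For the curvature, I would expand the wedge holonomy $g_{\partial s}=h_{l_2}^{-1}k_{r_2}^{-1}k_{r_1}h_{l_1}$ via the non-abelian Stokes theorem, so that $g_{\partial s}=\mathrm{Id}+\int_s F+O(r_\nu^4)$; since the generators $T^i$ are traceless, combining this with Lemma \ref{lemma2} applied to the $2$-form $F^i$ yields $\theta_s^i=\mathrm{tr}[T^i g_{\partial\bar s}]=F^i_{\mu\nu}(C_\nu)\,t_s^{\mu\nu}+o(r_\nu^2)$. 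The differentiability of $\omega$ guarantees the validity of these expansions with the stated orders.

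The crux is a purely combinatorial-geometric identity. Substituting the leading terms into \eqref{dac2}, the summand factorizes as $P_{IJKL}(T_i)^{JI}e^K_\alpha(C_\nu)e^L_\beta(C_\nu)F^i_{\mu\nu}(C_\nu)$ times the sign-weighted sum of characteristic tensors $\sum_T\epsilon^{PQRST}\,t_{l_P}^\alpha\,t_{l_Q}^\beta\,t_{s_{RS}}^{\mu\nu}$. On the other hand, Lemma \ref{lemma2} gives $\int_\nu\Sigma_i\wedge F^i=(\Sigma_i\wedge F^i)(C_\nu)_{\alpha\beta\mu\nu}\,t_\nu^{\alpha\beta\mu\nu}+o(r_\nu^4)$, with $t_\nu$ the characteristic $4$-tensor of the simplex. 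Matching the two requires establishing that
\begin{equation*}
	\sum_T\epsilon^{PQRST}\,t_{l_P}^{\alpha}\,t_{l_Q}^{\beta}\,t_{s_{RS}}^{\mu\nu}=c\;t_\nu^{\alpha\beta\mu\nu},
\end{equation*}
where both sides are contracted against tensors antisymmetric in $\alpha\beta$, in $\mu\nu$, and under the exchange $(\alpha\beta)\leftrightarrow(\mu\nu)$, and $c$ is a combinatorial constant that the conventions in \eqref{S} must render equal to one. I expect this to be the main obstacle: it is where the specific corner-cell subdivision, the orientation prescription behind $\sgn(l,l',s)$, and the barycentric geometry of the lines $l(\nu\tau)$ and the wedges $s(\sigma\nu)$ all enter. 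I would prove it by direct evaluation of the characteristic tensors from \eqref{chac} on the explicit barycentric coordinates of the vertices of $\nu$, checking that the signed sum tiles the oriented $4$-volume of the simplex and that $c=1$.

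Finally I would assemble the error estimate. Each of the three factors $e_l$, $e_{l'}$, $\theta_s$ is approximated separately, so Lemma \ref{lemma1}, applied iteratively to the products, converts the individual bounds of the second step into a bound on the error of each summand of order $o(r_\nu^4)$; since there are finitely many terms per simplex this yields $\lvert S_\nu-\int_\nu\Sigma_i\wedge F^i\rvert<\epsilon\,r_\nu^4$ for $n$ large. Summing over $4$-simplices and invoking condition (d) of Definition \ref{unif}, which bounds $r_\nu^4$ by a constant multiple of the $g_0$-volume of $\nu$, the total error is at most $\epsilon$ times a uniform constant proportional to $\mathrm{vol}_{g_0}(M)$. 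Letting $n\to\infty$, hence $\epsilon\to 0$, gives \eqref{conlim}.
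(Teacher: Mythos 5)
Your proposal is correct and follows essentially the same route as the paper's proof: approximate $e_l$ and $\theta_s$ by their barycenter values via Lemma \ref{lemma2}, control the products via Lemma \ref{lemma1}, invoke the combinatorial identity $\sum_{T}\epsilon^{PQRST}t_{l_P}^{\alpha}t_{l_Q}^{\beta}t_{s_{RS}}^{\gamma\delta}=\frac{1}{4}t_{\nu}^{\alpha\beta\gamma\delta}$ (the paper's Eq.~\eqref{ident}, where the constant is $1/4$ rather than $1$, compensated by an overall factor of $4$ coming from the counting of wedge orientations), and sum the per-simplex errors using condition d) of Definition \ref{unif}. The only cosmetic difference is that you compare simplex by simplex with $\int_\nu \Sigma_i\wedge F^i$, while the paper bounds the total error at once and lets the resulting Riemann sum $\sum_{\nu}\Sigma_{i,\alpha\beta}(C_\nu)F^i_{\gamma\delta}(C_\nu)t_\nu^{\alpha\beta\gamma\delta}$ converge to $S_H$.
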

\begin{proof}
	Let 
	$g_0$ be a positive-definite metric that is constant on each $\nu<\triangle_0$ with respect to its linear coordinates and $\epsilon>0$. From Lemmas \ref{lemma1} and \ref{lemma2} it follows that there exists $N>0$ such that for any $n>N$, $\nu<\triangle_{n}$ and $s, l_1, l_2 < \nu$ 
	\begin{align}
		\left|\theta_{s}^{i}-F_{\alpha\beta}^{i}(C_{\nu})t_{s}^{\alpha \beta}\right|&<\epsilon r_{\nu}^{2}\\
		\left|e_{l_1}^{K}e_{l_2}^{L}\left(T_{i}\right)^{JI}P_{IJKL}-\Sigma_{i,\alpha\beta}(C_{\nu})t_{l_1}^{\alpha}t_{l_2}^{\beta}\right|&<\epsilon r_{\nu}^{2}.
	\end{align}	
	In this way, for any $n>N$ 
	\begin{equation}\label{ev}
		S_{\triangle_n}=4\sum_{\nu<\triangle_n}\left[\Sigma_{i,\alpha\beta}(C_{\nu})F_{\gamma\delta}^{i}(C_{\nu})
		\sum_{l_1,l_1,s<\nu}t_{l_1}^{\alpha}t_{l_2}^{\beta}t_{s}^{\gamma \delta}\sgn(l_1, l_2, s)+\Delta I_{n_{\nu}}\right],
	\end{equation}	
	with a bounded error term $\Delta I_{n_{\nu}}$: 
	\begin{equation}\label{bound}
		\left|\Delta I_{n_{\nu}}\right|<480\epsilon r_{\nu}^{4}\left[\epsilon+\left\|\Sigma\right\|_{C_\nu}+\left\|F\right\|_{C_\nu}\right],
	\end{equation}	
	where the norm $\left\|T\right\|$ of a tensor $T_{\alpha_1,\dots,\alpha_n}^{i_1,\dots,i_m}$ with both Lie algebra and spacetime indices is 
	\begin{equation}\label{norm}
		\left\|T\right\|=\left[T_{\alpha_1,\dots,\alpha_n}^{i_1,\dots,i_m}T_{\beta_1,\dots,\beta_n}^{j_1,\dots,j_m}g_{0}^{\alpha_1\beta_1}\cdots g_{0}^{\alpha_n\beta_n}\lambda_{i_1j_1}\cdots\lambda_{i_mj_m}\right]^{1/2},
	\end{equation}	
	with $\lambda_{ij}$ the $so(3,1)$ Killing-Cartan metric.

Since the fields $\Sigma$ and $F$ are continuous on $M$, which is assumed to be compact, they are bounded. We can then write
	\begin{equation}\label{bound2}
		\left|\Delta I_{n_{\nu}}\right|<\epsilon r_{\nu}^{4}k,
	\end{equation}
	with $k$ a finite positive constant. Condition d) in Definition \ref{unif} says that the ratio $r_{\nu}^{4}/V_{\nu}$ (where $V_{\nu}$ is the $g_0$ volume of $\nu$) is uniformly bounded. In other words, $r_{\nu}^{4}<RV_{\nu}$ for some $R>0$. We then have that $\left|\Delta I_{n_{\nu}}\right|<\epsilon RkV_{\nu}$, and the sum of errors is bounded by
	\begin{equation}\label{bound3}
		\left|\sum_{\nu<\triangle_n}\Delta I_{n_{\nu}}\right|<\epsilon RV_{M}k,
	\end{equation}
	where $V_{M}$ is the $g_0$ volume of $M$. 
	
	On the other hand, it is possible to show that
	\begin{equation}\label{ident}
		\sum_{PQRST}t_{l_P}^{\alpha}t_{l_Q}^{\beta}t_{s_{RS}}^{\gamma\delta}\epsilon^{PQRST}=\dfrac{1}{4}t_{\nu}^{\alpha\beta\gamma\delta}.
	\end{equation}	
	By using all the results given above, we obtain 
	\begin{equation}\label{result}
		\lim_{n\rightarrow\infty}S_{\triangle_n}=\lim_{n\rightarrow\infty}\left(\left[\sum_{\nu<\triangle_n}\Sigma_{i,\alpha\beta}(C_{\nu})F_{\gamma\delta}^{i}(C_{\nu})t_{\nu}^{\alpha\beta\gamma\delta}\right]+\Delta I_{n}\right)=\int_{M}\Sigma_{i}\wedge F^{i}.
	\end{equation}	
	
\end{proof}

\subsection{Field equations}
\label{field}

Before starting the calculations, we will further simplify the expression for the action \eqref{S} and point out a feature of our discrete fields that is not present in closely related discrete models for gravity like the classical EPRL model \cite{EPRL}. 

Let us introduce $\hat{J}_s$, a function of the tetrad field $e_{l}$, and $J^{s}$, a function of $\hat{J}_s$. 
\begin{equation}\label{jtilde}
	\hat{J}_{s_{PQ}~i}:= e_{l_P}^{K}e_{l_Q}^{L}\left(T_{i}\right)^{JI}P_{IJKL}.
\end{equation} 
\begin{equation}\label{Bf}
	J^s_{i}:= \sum_{s' <\nu} \sgn(s', s) \ \hat{J}_{s'~i} .
\end{equation} 
We can also write $J^{PQ}_{i} := \sum_T \epsilon^{PQRST} \hat{J}_{{RS}~i}$. 

By using \eqref{Bf}, the action \eqref{S} can be written as
\begin{equation}\label{dac}
	S_{\triangle}(e, h, k)=\sum_{\nu<\triangle}\left[\sum_{s<\nu} J^s_{i}\theta_{s}^{i}\right].
\end{equation}	
We then discover an identity of our discrete model and something that looks like a symmetry of the model but is not so. 
They follow from the fact that the function taking the $\hat{J}_s$ variables to the $J^{s}$ is a linear transformation from $so(3,1)^{10}$ to itself which is not 1 to 1. The image of the transformation is the solution space of the relations 
\begin{equation}\label{closure}
	\sum_{s >l_P} J^s_{i} = 0 , 
\end{equation}
one such relation for each $l_P>\nu$. 
That is, if we consider the previous relations as a function of the $\hat{J}_s$ variables they become identities; they could be called  ``closure identities.'' The calculation is simple: $\sum_{s >l_P} J^s_{i} = 
	\sum_{TQ} \epsilon^{PQRST} \hat{J}_{{RS}~i} = 0$. 
On the other hand, the kernel of the transformation is given by elements of the form $K_{PQ~i}=\left(\delta_{P}^{I}-\delta_{Q}^{I}\right)k_{I~i}$, with $k_{I~i}\in so(3,1)$. 

Our basic fields are not the $\hat{J}$ variables but the tetrad field $e$, but let us consider \eqref{dac} as a function of the discrete connection and $\hat{J}$. In that case, the transformation $\hat{J}_{s~i}\rightarrow \hat{J}'_{s~i} = \hat{J}_{s~i}+K_{s~i}$ would be a gauge symmetry. 
For our model, however, the mentioned transformation is not a symmetry because it generically takes $\hat{J}$ variables that can be calculated from a tetrad $e$ following \eqref{jtilde} to variables that cannot be expressed in the same way.%
\footnote{If we were considering a discretization of BF theory with $\hat{J}$ as a basic field, the mentioned transformation would be a gauge symmetry of the model.}  

Now we begin calculating the field equations. They are obtained by taking the variation of \eqref{S} or \eqref{dac} with respect to the fields $h_{l}$, $k_{r}$ and $e_{l}$, while maintaining the variables $k_{r}$ fixed at the boundary of the region.

The variation of \eqref{dac} with respect to $h_{l}$ can be calculated by parametrizing the variations of this variable in the form $h_{l}+\delta h_{l}=h_{l}\text{exp}(\lambda\alpha_{l}^{i}T_{i})$, where $\lambda\in\mathbb{R}$ and $\alpha_{l}^{i}T_{i}\in so(3,1)$. For any $s$ with $l < s$, this induces a variation on $\theta_{s}^{i}$ given by
\begin{equation}\label{varth}
	\delta\theta_{s}^{i}:=\dfrac{d\theta_{s}^{i}}{d\lambda}\bigg\rvert_{\lambda=0}=\alpha_{l}^{j}\text{tr}\left[T_{j}T^{i}g_{\partial s}\right]. 
\end{equation}	
Therefore, the variation of \eqref{dac} with respect to $h_{l}$ is given by
\begin{equation}\label{var1}
	\delta S_{\triangle}:=\dfrac{d S_{\triangle}}{d\lambda}\bigg\rvert_{\lambda=0}=\alpha_{l}^{j}\sum_{s>l}w^s_{j},
\end{equation}
with	
\begin{eqnarray}\label{ws}
	w^s_{j}=&
	&J^s_{i} \ \text{tr}\left[T_{j}T^{i}g_{\partial s}\right] \nonumber \\ 
	=&
	&\left( \sum_{l_1, l_2 <\nu}\sgn(l_1, l_2, s)e_{l_1}^{K}e_{l_2}^{L}\right) \left(T_{i}\right)^{JI}P_{IJKL}\text{tr}\left[T_{j}T^{i}g_{\partial s}\right]
	. 
\end{eqnarray}

Thus, the extremality condition with respect to variations in $h_{l}$ is 
\begin{equation}\label{feq1}
	\sum_{s>l}w^s_{j}=0 . 
\end{equation}
%
%Notice that $w_{s}$ is a $so(3,1)$ variable attached to the barycenter of a 4-simplex. 

Notice that when $g_{\partial s}= \mbox{id}$, due to \eqref{closure}, this ``Gauss law'' reduces to an identity. Once the tetrad field is given, the ``Gauss law'' demands that the discretized connection be frame-compatible.

The variation of the action with respect to $k_{r}$ can be calculated by writing $k_{r}+\delta k_{r}=\text{exp}(\lambda\beta_{r}^{j}T_{j})k_{r}$. Since $g_{\partial s}=h_{l_2}^{-1}k_{r_2}^{-1}k_{r_1}h_{l_1}$, then the variation of $\theta_{s}^{i}$ is given by
\begin{displaymath}
	\delta\theta_{s}^{i}=\beta_{r_1}^{j}\text{tr}\left[T^{i}h_{l_2}^{-1}k_{r_2}^{-1}T_{j}k_{r_1}h_{l_1}\right],
\end{displaymath}	 
and the variation of \eqref{dac} is
\begin{equation}\label{var2}
	\delta S_{\triangle}=\alpha_{r_1}^{j}u_{\sigma\nu_1~j}+\alpha_{r_1}^{j}u_{\sigma\nu_2~j} , 
\end{equation}	
where $\nu_1$ and $\nu_2$ are the two 4-simplices that share the line $r_1$, and $u_{\sigma\nu~j}:=k_{r_1}h_{l_1}w^{s(\sigma \nu)}_jh_{l_1}^{-1}k_{r_1}^{-1}$.

Extremization of $S_{\triangle}$ with respect to $k_{r_1}$ implies 
\begin{equation}\label{feq2}
	u_{\sigma\nu_1}=u_{\sigma\nu_2} . 
\end{equation}	
Notice that $u_{\sigma\nu}$ is obtained by the parallel transport of $w^{s(\sigma\nu)}$ from $C_{\nu}$ to $C_{\sigma}$ following the lines $l_{1}, r_{1}$ in a positive direction. Field equation \eqref{feq2} implies that the variables $u_{\sigma\nu,j}$ are independent of the 4-simplices $\nu$ and are then uniquely associated with the 2-simplices $\sigma$ in $\triangle$. 

Finally, extremization with respect to the field $e_{l_P}^K$ yields
\begin{equation}\label{feq3}
	\sum_{T}\epsilon^{PQRST}e_{Q}^{L}\left(T_{i}\right)^{JI}P_{IJKL}\theta_{RS}^{i} = 0. 
\end{equation}

Equations \eqref{feq1}, \eqref{feq2} and \eqref{feq3} are the field equations that define the dynamics of our model. 

If the spacetime region considered has a boundary, the contributions to the action from 4-simplices meeting the boundary may need to be modified for the purpose of arriving to the correct bulk field equations without any equations conflicting with the boundary conditions. 
Clearly, if the gauge field $k_{r}$ is kept fixed at the boundary, no modification is needed. On the other hand, if its momentum, the variable $u_{\sigma}$, is kept fixed at the boundary, we need to avoid conflict between field equation \eqref{feq2} and the boundary condition. Consider that $r_1<s$ is part of the boundary; then in the action we should replace 
$\theta_{s}^{i}:=\text{tr}\left[T^{i} h_{l_2}^{-1}k_{r_{2}}^{-1}k_{r_1}h_{l_1} \right]$ with 
$\tilde{\theta}_{s}^{i}:=\text{tr}\left[T^{i} h_{l_2}^{-1}k_{r_{2}}^{-1} \text{id} h_{l_1} \right]$. In \cite{Rei1}, the analogous modification to the action at boundaries was proposed.

\section{Formulation in terms of a constrained $B$ field}% Yo
\label{B}

The discrete $B$ field assigns an element of the copy of $\mathfrak{so}(3,1)$ associated with $C_\nu$ to every wedge $s<\nu$. We write it as $B_s = B_{si} T^i \in \mathfrak{so}(3,1)$. 
Such a field may be considered the result of integrating the $B$ field in the continuum over $s$, in a way that is compatible with gauge transformations following a prescription similar to \eqref{map2}. 

The degrees of freedom inside each spacetime atom (4-simplex) can be neatly divided into bulk degrees of freedom and boundary degrees of freedom. The $B$ field and the $h$ field live in the bulk, and the only basic field living at the boundary is the boundary connection $k$. 

Our starting point will be the action 
\begin{equation}\label{B-action}
	S_{\Delta}(B, h, k)=\sum_{\nu<\Delta}\sum_{s,s'<\nu}\text{sgn}(s,s')\left( B_{s i}-\dfrac{1}{\gamma}{\ast}B_{s i} \right) \theta^{i}_{s'}  
\end{equation}	
together with a set of constraints on the $B$ field that makes it a discrete model for gravity. 

The idea is that if $B_{PQ} = \ast \ e_P \wedge e_Q$, then the action \eqref{B-action} reduces to \eqref{S}%
\footnote{Notice that $\hat{J}_{s,i}=B_{s,i}-\dfrac{1}{\gamma}{\ast}B_{s,i}$, but in this section $B$ is a basic field and not just a convenient variable.}. 
The contribution to the action from a 4-simplex can also be written as a sum over corner cells 
\[
S_\nu = \sum_T \epsilon^{PQRST}
\left( B_{PQ i}-\dfrac{1}{\gamma}{\ast}B_{PQ i} \right) \theta^{i}_{RS} . 
\]

\subsection{Geometricity à la EPRL}\label{GeometricityEPRL}

We will introduce a series of geometricity/simplicity conditions, formally stated below as \ref{G1}, \ref{G2} and \ref{G3}, 
which by construction will ensure that $B_{PQ} = \ast \ e_P \wedge e_Q$. 

\begin{definition}[Condition G1($\nu$)]\label{G1}
The $B$ field in spacetime atom $\nu$ is such that 
for any internal edge $l < \nu$ there is a vector $\vec{n}_l \in \R^{3,1} - \{ \vec{0} \}$ such that for any wedge $s$ with $l<s<\nu$ we have 
\[
n_{l I} B_s^{IJ}=0 .
\]	
\end{definition}

\noindent 
This constraint is analogous to the so-called ``linear simplicity'' constraint \cite{FK}, which is at the core of the construction of the EPRL model \cite{EPRL}. 
An important difference is that at each atom our constraint involves bulk degrees of freedom only.

Once we have a $B$ field satisfying Condition \ref{G1}, let us fix a choice of set of vectors $\{ n_l \}_{l<\nu}$ participating in \ref{G1}; 
it will be clear later that our considerations are independent of this choice. Condition \ref{G1} holds if and only if the $B$ field inside $\nu$ can be written as $B_{PQ} = \lambda_{PQ} \ast n_P \wedge n_Q$ for a set of scalars $\{\lambda_{PQ}=\lambda_{QP}\}$. 

If we had a ``tetrad'' written as $e_P = \lambda_P n_P$ for some coefficients $\lambda_P$, then the relation between $B$ and $e$ induces the system of equations 
\begin{equation}\label{lambdaEq1}
\lambda_{PQ} = \lambda_P \lambda_Q .	
\end{equation}
In this system of equations, we consider $\lambda_{PQ}$ as given data and $\lambda_P$ as unknown parameters that will determine the ``tetrad.'' For being able to determine $\lambda_P$, we add a second set of conditions on $B$. 

Recall that two wedges $s,s' < c$ contained in the same corner cell are transversal if and only if $\text{sgn}(s,s') \neq 0$. 

\begin{definition}[Condition G2($c$)]\label{G2}
	For any two pairs of transversal wedges $(s,s')$, $(s'',s''')$ {\bf contained in the same corner cell} $c$ the following equation holds 
\[
	\text{sgn}(s,s')\epsilon_{IJKL}B_s^{IJ}\wedge B_{s'}^{KL}=\text{sgn}(s'',s''')\epsilon_{IJKL}B_{s''}^{IJ}\wedge B_{s'''}^{KL}. 
\]
\end{definition}
This condition states that the 4-volume of the corner cell $c$ may be computed as $V(c):= 4^4 \text{sgn}(s,s')\epsilon_{IJKL}B_s^{IJ}\wedge B_{s'}^{KL}$ for any pair of transversal wedges $(s,s')$ contained in  $c$.%
\footnote{
The 4-volume of the corner cell $c<\nu$ may be calculated by dividing the corner cell into $4!$ ``flags'' (4-simplices of the baricentric subdivision of $\nu$). The volume of any flag in corner cell $c_5$ is $4^3/6$ times the volume of the parallelepiped determined by $l_1, l_2, l_3, l_4$.} 
The field is considered to be nondegenerate at $c$ if $V(c) \neq 0$.  
Conditions of this type have also previously appeared in discrete gravity models \cite{FD, Rei1}. In addition, a similar condition is considered as part of the classical EPRL model \cite{EPRL}. 
In the quantum EPRL model, however, the analogous condition is automatically satisfied. For a discussion, see \cite{Perez1}. 

For the sake of concreteness, let us examine the conditions for corner cell $c_{5}$ and study the case of a nondegenerate field satisfying Conditions G1 and G2. Condition G1 implies that the normal vectors $n_1, n_2, n_3, n_4$ are defined. 
Non degeneracy is equivalent to $\epsilon_{IJKL} n_1^I n_2^J n_3^K n_4^L \neq 0$, 
and Condition G2 may be stated as 
\begin{equation}\label{lambdaEq2}
	\epsilon^{PQRS5} \neq 0 \ \ \ \mbox{ implies } \ \ \ 
	\lambda_{PQ} \lambda_{RS} = \lambda_{12} \lambda_{34} \doteq \Lambda_5 . 
\end{equation}

A set of independent coefficients generating the rest (in the nondegenerate case) is $\lambda_{(12)}$, $\lambda_{(13)}$, $\lambda_{(14)}$, $\Lambda_5$. 
The system of equations \eqref{lambdaEq1}, \eqref{lambdaEq2} is easily solvable. We define 
$\lambda_1^2 \doteq \frac{\lambda_{13} \lambda_{14}}{\lambda_{34}} = \frac{\lambda_{12} \lambda_{13} \lambda_{14}}{\Lambda_5}$. In general, we can write 
$\lambda_P^2 \doteq \frac{\lambda_{P P+1} \lambda_{P P+2} \lambda_{P P+3}}{\Lambda_5}$ using appropriate cyclic conventions. 
It is clear that if $P, Q \in \{ 1,2,3,4 \}$ are different then 
$\lambda_P^2 \lambda_Q^2 = (\lambda_{PQ})^2$. Thus, if all the coefficients $\lambda_P^2$ are positive, we have found real solutions to the system of equations that let us determine $e$ in terms of $B$ for the variables associated with the corner cell $c_{5}$. 

An example showing that the positivity requirements are not trivial is constructed staring with a $B$ that is of the form $B_{PQ} = \ast e_P \wedge e_Q$ and defining $B' = -B$. This new field satisfies Condition \ref{G1} (with the same vectors $n_P$) and also satisfies Condition \ref{G2}. 
On the other hand, since $\lambda'_{PQ}= - \lambda_{PQ}$ the abovementioned positivity requirements are not met. 

Associated with cell $c_{5}$, we have positivity requirements for $\lambda_1^2$, $\lambda_2^2$, $\lambda_3^2$ and $\lambda_4^2$, but since $P\neq Q$ implies $\lambda_P^2 \lambda_Q^2 = (\lambda_{PQ})^2$, either all the coefficients $\lambda_P^2$ are positive or all of them are negative. 

Below we will state a condition, written directly in terms of the $B$ field, which is equivalent to the above positivity conditions. 
Since $\lambda_1^2$, when the $B$ field is geometric, it has the interpretation of being the proportionality factor between $n_1 \cdot n_1$ and $e_1 \cdot e_1$, it is not hard to guess that the appropriate expression is reminiscent of the $(1, 1)$ component of Urbantke's metric \cite{Urbantke}. 
A straightforward calculation leads to 
\[
	\frac{1}{12 V(c_{5})}f_{ijk} \epsilon^{ABCD5} B^i_{1A} B^j_{BC} B^k_{D1} = \lambda_1^2 \ (\eta_{IJ} n_1^I  n_1^J).
\]
Since the left-hand side of this expression is independent of the choice of vectors $n_P$, two comments follow: 
First, the positivity conditions are independent of the choice of vectors $n_P$. Let us explain. 
Assume that conditions \ref{G1} and \ref{G2} hold. 
A different choice of the vectors $\tilde{n}_P = k_P n_P$ for some scalars $k_P \neq 0$ (which is all the freedom of choice, once we are given a nondegenerate $B$ field satisfying Condition \ref{G1}) leads to a right hand side of the form $\tilde{\lambda}_1^2 \ (\eta_{IJ} \tilde{n}_1^I  \tilde{n}_1^J)= \tilde{\lambda}_1^2 \ k_1^2 \ (\eta_{IJ} n_1^I  n_1^J)$. Thus, $\tilde{\lambda}_1^2 > 0$ if and only if $\lambda_1^2 > 0$. 
Second, since the coefficients $\lambda_P^2$ are either all positive or all negative, the two available options lead to bilinear forms of different signatures. This observation leads to our last geometricity condition.

\begin{definition}[Condition G3($c$)]\label{G3}
	At each corner cell we have a metric density, constructed following Urbantke, 
\begin{equation}\label{UrbMetric}
	\tilde{g}_{QR}(c_{P}) =  
	\frac{1}{12}f_{ijk} \epsilon^{ABCDP} B^i_{QA} B^j_{BC} B^k_{DR} . 
\end{equation}
The condition is that the corresponding bilinear form 
$g_{QR}(c_{P}) = \frac{1}{V(c_{P})} \tilde{g}_{QR}(c_{P})$ has the same signature as Minkowski's metric $\eta$. 
\end{definition}

We have constructed a system of equations with an explicit solution corresponding to a four tuple of vectors $e_l$ for each of the five corner cells determined by the the restriction of the discrete $B$ field to that corner cell. 
If the solutions from different corner cells were compatible we would have a way to find a discrete tetrad field $e$ in terms of a discrete $B$ field. 
Consider the solutions found for $\lambda_1^2$ associated with corner cells $c_5$ and $c_4$. 
For corner cell $c_5$ we wrote, 
$\lambda_1^2(c_5) = \frac{\lambda_{12} \lambda_{13} \lambda_{14}}{\Lambda_5}$. 
Following the same procedure for $c_4$ we obtain 
$\lambda_1^2(c_4) = \frac{\lambda_{12} \lambda_{13} \lambda_{15}}{\Lambda_4}$ (with $\Lambda_4 = \lambda_{12} \lambda_{35}$). 
A simple calculation shows that 
$\frac{\lambda_1^2(c_5)}{\lambda_1^2(c_4)}= \frac{\lambda_{35}\lambda_{14}}{\lambda_{34}\lambda_{15}} = \frac{\Lambda_2}{\Lambda_2}=1$, which proves compatibility. 

We have arrived constructively to the main result of this subsection. 
\begin{theorem}\label{Thm1}
	A nondegenerate $B$ field may be written in terms of a tetrad $e$, which is unique up to sign, as 
	$B_{PQ} = \ast e_P \wedge e_Q$ 
	if and only if conditions \ref{G1}, \ref{G2} and \ref{G3} hold. 
\end{theorem}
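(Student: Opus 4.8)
The plan is to prove the two implications separately, carrying the uniqueness statement along in the construction. The backward implication (``if''), producing a tetrad from a $B$ field satisfying the three conditions, has essentially been assembled in the discussion preceding the statement; my task is to organize that material into a construction valid on the whole 4-simplex and to verify that it inverts the map $e \mapsto \ast e \wedge e$. The forward implication (``only if'') is then a direct verification that each of \ref{G1}, \ref{G2}, \ref{G3} holds whenever $B_{PQ} = \ast e_P \wedge e_Q$ for a nondegenerate $e$.

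For the ``if'' direction I would proceed corner cell by corner cell and then glue. First, invoke the equivalence of \ref{G1} with $B_{PQ} = \lambda_{PQ} \ast n_P \wedge n_Q$ for symmetric scalars $\lambda_{PQ}$. Fixing attention on a corner cell, say $c_5$, Condition \ref{G2} is the statement \eqref{lambdaEq2}, and I would solve the system \eqref{lambdaEq1}, \eqref{lambdaEq2} exactly as indicated: set $\lambda_P^2 \doteq \lambda_{P\,P+1}\lambda_{P\,P+2}\lambda_{P\,P+3}/\Lambda_5$ and check, using the \ref{G2} relations, that $\lambda_P \lambda_Q = \lambda_{PQ}$ for the determined values — the key point being that the single remaining relation, e.g. $\lambda_2\lambda_3 = \lambda_{23}$, is precisely $\lambda_{14}\lambda_{23} = \Lambda_5$, one of the \ref{G2} identities. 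Condition \ref{G3}, through the identity relating \eqref{UrbMetric} to $\lambda_P^2\,(\eta_{IJ}n_P^I n_P^J)$, guarantees that all $\lambda_P^2 > 0$, so each $\lambda_P$ is real; setting $e_P := \lambda_P n_P$ then yields $B_{PQ} = \ast e_P \wedge e_Q$ on $c_5$, with the only freedom being the common sign of the square roots. Finally I would glue: each internal edge $l_P$ lies in the four corner cells $c_Q$, $Q \neq P$, and the compatibility computation already displayed ($\lambda_1^2(c_5)/\lambda_1^2(c_4) = 1$, together with its analogues) shows the assigned magnitudes agree; fixing a single global sign (say $\lambda_1 > 0$) then determines every $\lambda_P = \lambda_{1P}/\lambda_1$ consistently across all corner cells, producing one tetrad $e$ on $\nu$, unique up to the replacement $e \mapsto -e$.

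For the ``only if'' direction I would substitute $B_{PQ} = \ast e_P \wedge e_Q$ into each condition. For \ref{G1}, take $n_P \propto e_P$; the contraction $e_{PI}(\ast e_P \wedge e_Q)^{IJ}$ vanishes because it pairs the totally antisymmetric $\epsilon$ against $e_P^I e_P^K$, which is symmetric in $I,K$, and nondegeneracy forces this to be the only direction common to all wedges at $l_P$. For \ref{G2}, writing $e_P = \lambda_P n_P$ gives $\lambda_{PQ}\lambda_{RS} = \lambda_P\lambda_Q\lambda_R\lambda_S$, manifestly independent of how $\{1,2,3,4\}$ is split into transversal pairs, which is exactly the asserted equality. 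For \ref{G3}, I would show that the bilinear form \eqref{UrbMetric} reduces to a positive multiple of $\eta_{IJ}e_Q^I e_R^J$; since $e$ is a nondegenerate tetrad, this has the signature of $\eta$.

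The step I expect to demand the most care is the gluing in the ``if'' direction: each corner cell fixes its tetrad only up to an independent sign, so one must verify both that the magnitudes $|\lambda_P|$ assigned to a shared edge by different corner cells coincide (the compatibility calculation) and that a single global sign choice is simultaneously consistent with all five corner cells and reproduces $B_{PQ} = \ast e_P \wedge e_Q$ for every pair, not merely within one cell. A secondary point worth spelling out is the uniqueness claim: the relations $\lambda_{PQ} = \lambda_P \lambda_Q$ rigidly couple the signs, so flipping a single $\lambda_P$ is inconsistent and the only residual freedom is the overall sign, which is exactly $e \mapsto -e$.
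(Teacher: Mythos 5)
Your proposal is correct and follows essentially the same route as the paper, whose proof of Theorem \ref{Thm1} is exactly the constructive discussion preceding it: the $\lambda_{PQ}$ parametrization from \ref{G1}, the solution $\lambda_P^2 = \lambda_{P\,P+1}\lambda_{P\,P+2}\lambda_{P\,P+3}/\Lambda_5$ of the system \eqref{lambdaEq1}--\eqref{lambdaEq2}, the signature argument from \ref{G3}, and the inter-cell compatibility check $\lambda_1^2(c_5)/\lambda_1^2(c_4)=1$. Your explicit handling of the square-root sign choices (fixing $\lambda_1>0$ and setting $\lambda_P=\lambda_{1P}/\lambda_1$) is a slightly more careful rendering of a point the paper leaves implicit, but it is the same argument.
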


It is not difficult to see that under the conditions of the above theorem the pseudometrics \eqref{gEta} and \eqref{UrbMetric} agree. Then if $c$ and $c'$ are neighboring corner cells in a 4-simplex, the corresponding metrics are compatible as stated in Subsection \ref{discretization}. 

Now we will define another discrete model. 
\begin{definition}[B-field model 1]
	The action $S_{\Delta}(B, h, k)$ defined in \eqref{B-action}, together with Conditions \ref{G1} and \ref{G2} restricting the $B$ field, determines $B$-model 1. 
\end{definition}

A discrete tetrad field $e$ is nondegenerate at a corner cell $c$ if and only if $\{e_P\}_{l_P<c}$ is a basis for the copy of Minkowski's space sitting at $c$. It is clear that if a tetrad field and a $B$ field are related by $B_{PQ} = \ast e_P \wedge e_Q$ then $B$ is nondegenerate if and only if $e$ is nondegenerate. 

\begin{corollary}\label{Cor1}
$B$-model 1, in its nondegenerate sector and complemented with Condition \ref{G3}, is equivalent to the $e$-model in its nondegenerate sector. 
\end{corollary}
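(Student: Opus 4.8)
The plan is to read the corollary as a direct consequence of Theorem \ref{Thm1} together with the fact that the two actions coincide on corresponding configurations. First I would set up the bijection of configuration spaces. By Theorem \ref{Thm1}, a nondegenerate $B$ field satisfies Conditions \ref{G1}, \ref{G2} and \ref{G3} if and only if it can be written as $B_{PQ} = \ast\, e_P \wedge e_Q$ for a tetrad $e$ that is unique up to an overall sign. Combined with the remark preceding the corollary—that under this relation $B$ is nondegenerate exactly when $e$ is—this yields a bijection between the nondegenerate configurations of $B$-model 1 complemented with \ref{G3} and the nondegenerate configurations of the $e$-model, after quotienting the latter by the global flip $e \mapsto -e$. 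The connection variables $h_l, k_r$ are shared by both models and are left untouched by the correspondence.

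Next I would show that the actions agree under this map. Using the identity $\hat{J}_{s,i} = B_{s,i} - \frac{1}{\gamma}\ast B_{s,i}$ recorded in the footnote to \eqref{B-action}, valid precisely when $B_{PQ} = \ast\, e_P \wedge e_Q$ (with $\hat{J}_{s,i}$ the tetrad expression \eqref{jtilde}), I would substitute into \eqref{B-action}, relabel the summed pair $(s,s')$, and recognize the result as \eqref{dac}, which is the $e$-model action \eqref{S}. Hence $S_{\Delta}(B,h,k) = S_{\Delta}(e,h,k)$ on every pair of corresponding configurations. Since the actions are literally equal and enter $h,k$ in the same way, the two variational problems, their field equations, and their solution sets are carried into one another by the correspondence; this is the sense in which the two models are equivalent.

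The step that needs the most care—and the one I expect to be the main obstacle—is the variational compatibility of the two extremization problems: in $B$-model 1 one varies $B$ on the constraint surface cut out by \ref{G1} and \ref{G2} (restricted to the open region where \ref{G3} holds), whereas in the $e$-model one varies $e$ freely. The resolution is that Theorem \ref{Thm1} exhibits the map $e \mapsto \ast\, e \wedge e$ as a smooth, sign-quotiented parametrization of exactly that constraint surface, so variations of $e$ sweep out precisely the variations of $B$ tangent to the surface, and stationarity of one action is equivalent to stationarity of the other. I would also emphasize the role of \ref{G3}: it selects the branch of Lorentzian signature and thereby excludes the spurious configurations $B \mapsto -B$ that satisfy \ref{G1} and \ref{G2} but correspond to no real tetrad (the example discussed after Condition \ref{G2}). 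This is exactly why $B$-model 1 must be complemented with \ref{G3} before it can be declared equivalent to the $e$-model.
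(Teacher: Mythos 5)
Your proposal is correct and follows essentially the same route the paper intends: Theorem \ref{Thm1} together with the remark on nondegeneracy gives the sign-quotiented bijection of configurations, and the identity $\hat{J}_{s,i}=B_{s,i}-\frac{1}{\gamma}\ast B_{s,i}$ from the footnote to \eqref{B-action} makes the two actions coincide under that correspondence. The paper in fact states Corollary \ref{Cor1} without an explicit proof, so your additional discussion of variational compatibility (variations of $e$ parametrizing exactly the variations of $B$ tangent to the constraint surface) fills in a step the paper leaves implicit rather than departing from its argument.
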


%
%********
%
%LET US DESCRIBE THE GEOMETRY OF THE HISTORIES OF OUR MODEL, IN THIS VERSION WHERE $B$ IS A BASIC FIELD. 
%
%DO IT. Only nondegenerate histories are considered. 
%Condition G3 assumed to hold at the level of histories. 
%
%IS IT POSSIBLE? THE VOLUME OF A c may not be defined. 
%
%********

\subsection{Alternative form of the geometricity conditions}
\label{AlternativeGeoCond}

First, a set of quadratic geometricity conditions GG1 (strengthening condition G2) is imposed at each corner cell. 

\begin{definition}[Condition GG1($c$)]\label{GG1}
For all $s, s' < c$ 
\[
	\epsilon_{IJKL}B_s^{IJ}\wedge B_{s'}^{KL} - \frac{1}{12} \text{sgn}(s,s')
	\sum_{s'', s''' <c}
	\text{sgn}(s'',s''')\epsilon_{IJKL}B_{s''}^{IJ}\wedge B_{s'''}^{KL} = 0 . 
\] 
\end{definition}

A wellknown result of De Pietri and Freidel \cite{FD} (see also \cite{Thi}), translated to our Lorentzian case, says that the space of solutions of GG1 is divided into sectors: In the sector of interest, we have (i) $B_{PQ} = \ast e_P \wedge e_Q$;  other sectors of solutions are composed of $B$ fields of the form (ii) $B_{PQ} = e_P \wedge e_Q$, (iii) $B_{PQ} = - e_P \wedge e_Q$ and (iv) $B_{PQ} = - \ast e_P \wedge e_Q$. We will impose constraints GG2 and GG3, which discard the unwanted sectors. 

Clearly, condition G2 from the previous subsection is included in the set of constraints GG1. Thus, the 4-volume of a corner cell may be calculated as $V(c):= 4^4 \text{sgn}(s,s')\epsilon_{IJKL}B_s^{IJ}\wedge B_{s'}^{KL}$ for any pair of transversal wedges $(s,s')$ contained in $c$. 
\begin{definition}[Condition GG2($c$)]\label{GG2}
The $B$ field in corner cell $c$ is such that $V(c) > 0$. 
\end{definition}

\begin{definition}[Condition GG3($c$)]\label{GG3}
	At each corner cell, 
	the bilinear form defined in terms of the $B$ field by 
\[
	g_{QR}(c_{P}) =  
	\frac{1}{12 V(c_{P})}f_{ijk} \epsilon^{ABCDP} B^i_{QA} B^j_{BC} B^k_{DR}
\]
has the same signature as Minkowski's metric $\eta$. 
\end{definition}

Now we can state the main result of this subsection. 
\begin{theorem}\label{Thm2}
	A nondegenerate $B$ field may be written in terms of a tetrad $e$, which is unique up to sign, as 
	$B_{PQ} = \ast e_P \wedge e_Q$ 
	if and only if conditions \ref{GG1}, \ref{GG2} and \ref{GG3} hold. 
\end{theorem}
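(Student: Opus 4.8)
The plan is to build the proof on the De~Pietri--Freidel classification already invoked in the text, which reduces Condition~\ref{GG1} to the statement that, at each corner cell, the nondegenerate $B$ field lies in exactly one of the four sectors (i)~$B_{PQ}=\ast e_P\wedge e_Q$, (ii)~$B_{PQ}=e_P\wedge e_Q$, (iii)~$B_{PQ}=-e_P\wedge e_Q$, (iv)~$B_{PQ}=-\ast e_P\wedge e_Q$ for some nondegenerate tetrad. The entire content of the theorem is then that Conditions~\ref{GG2} and \ref{GG3} single out sector~(i) together with the correct orientation, and that the reconstruction of $e$ from a sector~(i) field is unique up to an overall sign. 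For the existence and uniqueness of $e$ once sector~(i) is reached I would import the explicit inversion already carried out for Theorem~\ref{Thm1}: the invariance of $\ast e_P\wedge e_Q$ under $e\to -e$ gives the sign ambiguity, and nondegeneracy rules out any further freedom.

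For the ``only if'' direction I would assume $B_{PQ}=\ast e_P\wedge e_Q$ with $e$ nondegenerate and verify the three conditions in turn. Condition~\ref{GG1} holds because $\ast e_P\wedge e_Q$ is precisely the gravitational solution of the quadratic simplicity constraints~\cite{FD}; a one-line check using $(\ast e_P\wedge e_Q)^{IJ}=\epsilon^{IJ}{}_{KL}e_P^K e_Q^L$ confirms it. For Condition~\ref{GG3} I would observe that substituting $B=\ast e\wedge e$ into \eqref{UrbMetric} reproduces the tetrad pseudometric \eqref{gEta}, which has Minkowski signature by construction; this is the same identification used for Theorem~\ref{Thm1}. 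For Condition~\ref{GG2} I would compute $V(c)$ directly: contracting two transversal wedges gives $\epsilon_{IJKL}B_s^{IJ}B_{s'}^{KL}\propto \det(e)$, so $V(c)$ is the signed $4$-volume of the corner cell and is positive for the geometric, orientation-compatible configuration.

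For the ``if'' direction I would start from the four sectors provided by \ref{GG1} and eliminate (ii),~(iii),~(iv). The key structural inputs are that $V(c)$ is \emph{quadratic} in $B$ while the Urbantke form \eqref{UrbMetric} is \emph{cubic} and carries the factor $1/(12V)$. Thus under $B\to -B$ the volume is unchanged but the bilinear form flips sign, hence its signature is reversed; this excludes sector~(iv) by \ref{GG3} once \ref{GG2} has fixed the sign of the normalization. To separate the dual sectors (i),(iv) from the non-dual sectors (ii),(iii) I would use the Lorentzian identity $\ast^2=-1$ on bivectors: it makes $\epsilon_{IJKL}B_s^{IJ}B_{s'}^{KL}$ change sign between the dual and non-dual cases, so that \ref{GG2} together with \ref{GG3} cannot hold in the non-dual sectors. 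What remains is sector~(i) with the orientation for which $V(c)>0$, i.e. $B_{PQ}=\ast e_P\wedge e_Q$, and the reconstruction of $e$ proceeds as before.

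I expect the main obstacle to be the consistent sign and signature bookkeeping across the four sectors, precisely because the internal Hodge star is Lorentzian: the Euclidean intuition about which sector carries positive volume and which carries Minkowski signature is reversed by $\ast^2=-1$, and one must track how \eqref{UrbMetric} responds both to $B\to -B$ and to the interchange of the dual and non-dual forms. A clean way to package this, and the route I would ultimately prefer, is to prove directly that for nondegenerate $B$ the condition sets $\{\ref{GG1},\ref{GG2},\ref{GG3}\}$ and $\{\ref{G1},\ref{G2},\ref{G3}\}$ are equivalent and then quote Theorem~\ref{Thm1}: here $\ref{GG1}\Rightarrow\ref{G2}$ and $\ref{GG3}=\ref{G3}$ are immediate, so the only genuine steps are recovering the linear-simplicity normals $n_l$ of \ref{G1} in the selected sector (where one may simply take $n_{l_P}=e_P$) and checking that \ref{GG2} reproduces the same orientation that \ref{G3} already fixes.
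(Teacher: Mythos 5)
Your proposal is correct and takes essentially the same route as the paper: invoke the De~Pietri--Freidel classification of the nondegenerate solutions of Condition~\ref{GG1} into the four sectors, then eliminate the unwanted sectors by parity — the volume $V(c)$ is quadratic in $B$ and flips sign under the internal Hodge dual, which removes the non-dual sectors, while the Urbantke form is cubic and flips sign under $B\to -B$, which removes sector~(iv) via Condition~\ref{GG3}. The only cosmetic differences are that the paper discards sectors (ii) and (iii) using Condition~\ref{GG2} alone, through the identity $V(c)[\ast B]=-V(c)[B]$, rather than ``GG2 together with GG3,'' and it argues directly without your alternative repackaging through Theorem~\ref{Thm1}.
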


\begin{proof}
	First, a simple calculation shows that a field of the form $B_{PQ} = \ast e_P \wedge e_Q$ satisfies conditions GG1, GG2 and GG3. Let us see that the set of conditions are sufficient. 
	
	The mentioned result of \cite{FD} states that the set of solutions of GG1 is composed by sectors (i) to (iv) listed above. 
%	(i) $B_{PQ} = \ast e_P \wedge e_Q$, 
%	(ii) $B_{PQ} = e_P \wedge e_Q$, 
%	(iii) $B_{PQ} = - e_P \wedge e_Q$ and 
%	(iv) $B_{PQ} = - \ast e_P \wedge e_Q$. 
	Since (i) satisfies condition GG2 and $V(c)[\ast B] = - V(c)[B]$, we see that condition GG2 discards sectors (iii) and (ii). On the other hand, since condition GG3 is cubic and it is satisfied by sector (i), it rules sector (iv) out. 
\end{proof}

Now we introduce yet another discrete model. 
\begin{definition}[B-field model 2]
	The action $S_{\Delta}(B, h, k)$ defined in \eqref{B-action}, together with Condition \ref{GG1}  restricting the $B$ field, determines $B$-model 2. 
\end{definition}

\begin{corollary}\label{Cor2}
$B$-model 2, complemented with Conditions \ref{GG2} and \ref{GG3}, is equivalent to the $e$-model in its nondegenerate sector. 
\end{corollary}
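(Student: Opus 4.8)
The plan is to reduce the statement to Theorem \ref{Thm2} together with the observation recorded just after \eqref{B-action}, namely that the two actions coincide whenever $B_{PQ} = \ast e_P \wedge e_Q$. Since $B$-model 2 and the $e$-model share exactly the same connection data $(h_l, k_r)$, it suffices to exhibit a bijection, compatible with the constraints and preserving the action, between the nondegenerate tetrad fields $e$ and the nondegenerate $B$ fields obeying Conditions \ref{GG1}, \ref{GG2}, \ref{GG3}; the variational problems then coincide on corresponding configurations. I would phrase the proof so that it parallels the argument behind Corollary \ref{Cor1}, with Theorem \ref{Thm2} playing the role there played by Theorem for $B$-model 1.

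First I would set up the correspondence on configuration spaces. Holding $(h_l, k_r)$ fixed, define $e \mapsto B$ by $B_{PQ} = \ast e_P \wedge e_Q$. The forward direction is the ``simple calculation'' already invoked in the proof of Theorem \ref{Thm2}: a $B$ of this form satisfies \ref{GG1}, \ref{GG2} and \ref{GG3}, and it is nondegenerate precisely when $e$ is (as noted before Corollary \ref{Cor1}). For the inverse direction I would cite Theorem \ref{Thm2} directly: any nondegenerate $B$ obeying \ref{GG1}, \ref{GG2}, \ref{GG3} arises as $\ast e \wedge e$ for a tetrad $e$ unique up to a global sign. Because $B_{PQ} = \ast e_P \wedge e_Q$ is quadratic and hence invariant under $e \to -e$, the assignment descends to a genuine bijection between the nondegenerate $B$ fields of $B$-model 2 restricted by \ref{GG2}, \ref{GG3} and the nondegenerate tetrads modulo global sign.

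Next I would check that this bijection intertwines the actions. On the image of the map the identity $\hat{J}_{s\,i} = B_{s\,i} - \tfrac{1}{\gamma}\ast B_{s\,i}$ holds (this is exactly where \eqref{jtilde} and the geometric form of $B$ meet); substituting it into \eqref{B-action} and resumming the sign factors via \eqref{Bf}, using $\sum_{s}\sgn(s,s')\hat{J}_{s\,i}=J^{s'}_i$, collapses the double sum to $\sum_{\nu}\sum_{s<\nu} J^s_i \theta^i_s$, which is precisely the form \eqref{dac} of the $e$-model action. Hence $S_\Delta(B,h,k)=S_\Delta(e,h,k)$ for corresponding fields, so extremizing \eqref{B-action} over $B$ on the constraint surface is identical to extremizing \eqref{S} over $e$, with $(h_l,k_r)$ untouched. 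Equivalence in the nondegenerate sector follows: solutions of one model's field equations correspond bijectively to solutions of the other's.

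The step most in need of care is not any single computation but the precise sense in which constrained variation of $B$ reproduces free variation of $e$. The point I would emphasize is that, by Theorem \ref{Thm2}, the surface cut out by \ref{GG1}, \ref{GG2}, \ref{GG3} in the nondegenerate sector is \emph{exactly} the image of the quadratic map $e \mapsto \ast e \wedge e$, with no spurious extra directions; therefore extremizing the action restricted to this surface is literally extremizing the $e$-model action, and the Lagrange multipliers enforcing the constraints can neither create nor destroy critical points. The De Pietri--Freidel sector classification underlying Theorem \ref{Thm2} is what guarantees this, and it is the only substantive ingredient; everything else is bookkeeping already carried out for the $e$-model.
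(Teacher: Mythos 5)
Your proposal is correct and follows essentially the same route the paper intends: Corollary \ref{Cor2} is meant as an immediate consequence of Theorem \ref{Thm2} (providing the sign-ambiguous bijection between nondegenerate constrained $B$ fields and nondegenerate tetrads) combined with the identity $\hat{J}_{s\,i}=B_{s\,i}-\tfrac{1}{\gamma}{\ast}B_{s\,i}$ noted after \eqref{B-action}, under which \eqref{B-action} collapses to \eqref{dac}. Your added care about why constrained extremization of $S_{\Delta}(B,h,k)$ on the surface cut out by \ref{GG1}--\ref{GG3} coincides with free extremization of $S_{\Delta}(e,h,k)$ is a detail the paper leaves implicit, and it is handled correctly.
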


\section{New perspectives}\label{NewPerspectives}

Differences among models obtained from a field theory by a cutoff are to be expected, but in a quantum continuum limit inessential differences are expected to be washed out by universality. From this point of view, the differences that we will be talking about in this section are most likely inessential. 
Among equivalent models, the possible advantages of one of them over another have two possible origins. Either one model leads to a certain insight that is not as clear in the other model, or one model leads to more efficient simulations (in general or when a particular observable is studied). 

Below we comment on two possible advantages of our model. The first one follows from the use of curved spacetime atoms. The second one is that the ``bulk geometry picture'' of our spacetime atoms leads to a natural strategy to couple matter fields which is not present for other models. 

We also comment on difference between our model and most other models, which after a more detailed consideration can be regarded as apparent (or superfluous) differences only. 

\subsection{Apparent differences}

%--ERASE?? or MOVE OUT OF THIS SUBSEC??-- \\
%The most direct geometric description of our model stems from our discrete tetrad field, and at first sight the geometry of the classical prequel to the EPRL model appears to be a Regge geometry \cite{Regge}. 
%In a quantum theory, however, properties of the space of histories play a central role, and the properties of the space of classical solutions are relevant only in the appropriate regime. 
%In this respect, in our $e$-model the geometry of each history follows directly from the tetrad field (where a reference frame is associated to the bulk of each spacetime corner-cell). 
%On the other hand, for the EPR model the relevant geometry at the level of histories is a twisted boundary geometry \cite{FS}. 
%For our $B$-model we have a metric density associated to each spacetime corner-cell and also a twisted geometry picture for the interior gluing of the five corner-cells in each spacetime atom. We will comment on these two aspects below. 

It is evident that the contribution to the action from a 4-simplex in our model is more complicated than the terms of the action in the classical EPRL model. Considering that our 4-simplices include curvature (in a complete set of directions for the discrete geometry), what we should compare, however, is the complexity of the action of one of our 4-simplices with the action associated with the star of a vertex in the triangulation of a Regge discretization. From this perspective, the action of our model does not seem to be overly complicated; the difference seems to be mostly how degrees of freedom are organized with respect to the simplicial decomposition. Below we see two other apparent differences with the same origin.

In Reisenberger's original proposal \cite{Rei2}, a spin foam model may be constructed after integrating over the bulk degrees of freedom of spacetime atoms. It is then natural to ask how does our model look like when described in terms of boundary data. 
In our classical model, the set of boundary data is composed of the boundary connection and its momenta 
$u_\sigma$``$= \frac{\delta S}{\delta k_r} = - \frac{\delta S}{\delta k_{r'}}$'' 
(where $\partial s(\sigma \nu) = l + r - r' - l'$). The formula in terms of the tetrad field is 
\[
u_{\sigma\nu~j}:=k_{r_1}h_{l_1} \left(
\sum_{l_1, l_2 <\nu}\sgn(l_1, l_2, s)e_{l_1}^{K}e_{l_2}^{L}\left(T_{i}\right)^{JI}P_{IJKL}\text{tr}\left[T_{j}T^{i}g_{\partial s}\right]
\right) h_{l_1}^{-1}k_{r_1}^{-1} . 
\]
Notice that the momentum variable associated with a wedge $s$ is an $so(3,1)$ element which is not simple in the generic case; tat is,  it cannot be written as a wedge product of two Minkowski vectors. This looks like a clear difference with the classical prequel of the EPRL model \cite{EPRL}. At the level of corner cells boundaries, however, simplicity is present: 
Consider a corner cell $c<\nu$ and a wedge $s(\sigma \nu)$ intersecting it. 
Let us define $u_{\sigma \nu}|_c$ as the restriction of the sum given above where 
$\sum_{l_1, l_2 <\nu}$ is replaced by $\sum_{l_1, l_2 < c}$. In the resulting sum, there are only two nonvanishing terms corresponding to the wedge product of two Minkowski vectors. 
Thus, the apparent difference arises because in our spacetime atoms we have momentum variables which are a sum of simple momenta associated to the corner cells.

The signature geometry of the EPRL model is twisted boundary geometry \cite{FS}. It is intriguing that, even in our B-models, this key feature seems to be absent. A more detailed study reveals that this is only an apparent difference even at the classical level. 
In our B-models, the $B$ field is shared by neighboring corner cells within the same 4-simplex, and this gluing of neighboring elements admits a twist similar to that of twisted geometries in exactly the same sense as the classical prequel to the EPRL model does: 
Consider the piece of a hypersurface intersecting the interior of a spacetime atom; it splits the set of cornercells of the spacetime atom into two complementary subsets. The split is either of type $1-3$ or of type $2-3$. Let us describe the $1-3$ case. In this case the portion of hypersurface has the same combinatorial structure as the hypersurface composed by the gluing of the 4 3-cubes sharing a vertex of a 4-cube. (Its 3d analog is easy to picture just looking at a corner of a room, the 3 2-cubes making the portion of hypersurface are two walls and the flor.) Our $B$ field associates compatible elements to the 2-faces of neighboring 3-cubes, but if the geometricity conditions are imposed softly a shape mismatch is allowed in the same manner as in twisted geometries \cite{FS}.

\subsection{Curvature in spacetime atoms}

As far as the gauge field is concerned, we follow Reisenberger's discretization \cite{Rei1, Rei2} in which spacetime atoms are curved. 
In each corner cell of a 4-simplex we have a measure of curvature in a complete set of independent directions: the gauge field assigns a group element to each elementary plaquette (or wedge in Reisenberger's terminology) $g_{\partial s}$ of the corner cell. 
Then this feature is used to write a regularization of the curvature $F_{ab}^i$ from the continuum to the discrete scenario $\theta_{\partial s}^i$, giving a complete representation of curvature at each corner cell. 
This notion of discrete curvature of the connection was already used in the definition of the action of our model, and its convergence in the continuum limit was proven. It can also be used to regularize other observables (functions in the space of histories) measuring curvature. 

As mentioned above, the contribution to the action of a spacetime atom is more involved that a single term in the action of the classical prequel to the EPRL model, and it should be compared to the contribution to the action of the star of a vertex. 
An advantage of having a handle of curvature inside a spacetime atom over having that information in the star of a vertex is that the connectivity of the internal structure of Reisenberger's atoms is always the same; on the other hand, in a Regge lattice the star of one vertex or another one may have different combinatorial structures. This regularity played an important role in constructing a simple proof of the convergence of the action in the classical continuous/macroscopic limit. 

Another consequence of working with curved spacetime atoms is that, as seen in Section \ref{Model}, the action associated to a region with boundary and corners can be specified without having any information about the complement of the region. 
We mention corners not because we care about regions of the boundary that are not differentialble; we want to consider general gluing of spacetime regions in which more than two spacetime regions can intersect in codimension two cells. 
Notice that the additivity properties of the action for regions with boundary given in Section \ref{Model} is clear even in the case of those arbitrary gluings. 
In contrast, in Regge calculus the deficit angle of a boundary bone has to be arbitrarily split into contributions from the atoms sharing that bone. (An important case where the arbitrariness can be resolved and gluing works \cite{Hartle+Sorkin} is when spacetime regions are only allowed to glue along codimension one hypersurfaces without corners.) 
One could say that Reisenberger's discretization with curved spacetime atoms naturally allows subdivisions into spacetime regions with boundary and corners. This holds in particular for the models presented in this article.

Only very recently have the first spin foam simulations involving more than one simplex been performed \cite{ESFM1, SFmoreThan1simplex1, SFmoreThan1simplex2}. Since the geometry of the current spin foam models is based on flat 4-simplices, the curvature degrees of freedom have barely been tested in spin foam models for quantum gravity \cite{Livine}. In contrast, if a quantum version of our model is constructed, the study of a single spacetime atom would necessarily include curvature degrees of freedom. 

The momenta to the boundary connection $u$ in our model, which describes the boundary geometry, is curvature dependent. Thus, a calculation regarding boundary geometry observables for a single spacetime atom would test curvature degrees of freedom. In other words, a single vertex in the spin foam representation would probe curvature degrees of freedom. 

% Added in response to referee's Comment 3. 
We end this subsection with a brief discussion comparing Reisenberger's discretization of gauge fields with a discretization in which one could presume that spacetime atoms are flat, like the one naturally emerging from Regge's discretization of spacetime geometry. Is one of these pictures more fundamental? 
First we should add a clarifying note; when we mentioned Regge's discretization we were talking in a figurative manner because the variables used by the EPRL model and the Barrett-Crane model are not a direct quantization of the link's lengths. In particular, the EPRL model uses a gauge field and a $B$ field just as we do in our $B$-field models. The gauge field of the EPRL model can be described using a lattice regularization on the lattice living in the 1-skeleton of the cellular decomposition dual to the triangulation. There is, however, an alternative equivalent picture in which group elements are not associated to lattice links (of the mentioned dual cellular decomposition) but to the codimension 1 hypersurfaces. In this picture parallel transport can be defined for curves ``in the continuum of Regge's triangulation'': The result is trivial when the curve stays inside a 4-simplex, and when it crosses a codimension 1 hypersurface it is multiplied by the left by the corresponding group factor. We will continue this short discussion using the lattice gauge theory picture. 
The issue boils down to comparing two choices regarding the placement of the smallest loops (closed oriented curves), the minimal circuits in which curvature can be sensed. Either they are contained inside spacetime atoms, or spacetime atoms are somehow associated to the vertices visited by the minimal loops. In our view, both pictures provide a cutoff for the field and that is their purpose. 
Another job that they should get done is giving rise to continuum fields in an appropriate continuum/macroscopic limit. In our opinion, Reisenberger's continuum limit \cite{Rei1}, that we adapt to study the continuum limit of our action in Section \ref{ContinuumLimit}, is extremely straight forward in comparison with the continuum limit for Regge's discretization \cite{CheegerEtal}. 
%%%%%

\subsection{Bulk geometry and matter coupling}

The fields of our models lead to a picture of the bulk geometry at every atom. 

Let us explain this claim first for our $e$ model. The geometry of the model is described by a discretized tetrad field $e$. Inside each corner cell $c < \nu$ of each atom we have a soldering form evaluated at the baricenter of the atom $C\nu$. This gives us a picture of the bulk geometry at each atom; in particular, we can write a (pseudo) metric associated to each corner cell. 
For corner cell $c_P < \nu$ the (pseudo) metric is  
$g_{RS}(c_P) = e_R^I (\nu) e_R^J (\nu) \eta_{IJ}$ (with $R,S \neq P$); 
It tells us the inner product of the Minkowski vectors associated to a pair of internal edges $e_{l_R} , e_{l_S}$ in atom $\nu$. 
The five corner cells in an atom have pairwise compatible metrics in the sense that the metrics of two corner cells agree on the three-dimensional spaces where they intersect. 

Having such an $e$ field lets us write every geometric observable of interest, where by an observable we mean a function of the space of histories. 
If in the continuum we have a geometrical observable written as a differential form written in terms of the matter field and the soldering form, we have a corresponding discrete analog. Moreover, if the geometrical meaning of the observable in the continuum comes about in terms of integrals of the differential form, then we can clearly establish convergence of the discretization using the techniques of Section \ref{ContinuumLimit}. In this way we can give regularized expressions for lengths of curves, areas of surfaces, volumes of hypersurfaces and 4-volumes of spacetime regions. Notice that at the level of discrete geometry, the direct regularization that we are talking about deals with curves, surfaces or hypersurfaces intersecting the interior of atoms; in contrast, other models have geometric information about geometric objects that lie at the boundary of spacetime atoms. (Additionally, a picture of the boundary geometry in our model can be obtained using the momentum of the boundary connection $u$.)

Now let us move to our $B$ field models. We can regularize geometric expressions written in the continuum in terms of the $B$ field. We can regularize the area of a surface. We can also regularize the (pseudo) metric density\\
$\tilde{g}_{ab}(x)= \frac{1}{12} f_{ijk} \epsilon^{cdef} B^i_{ac}(x) B^j_{de}(x) B^k_{fb}(x)$ 
\[
\tilde{g}_{RS}(c_{P}<\nu)= \frac{1}{12} f_{ijk} \epsilon^{ABCDP} B^i_{RA}(\nu) B^j_{BC}(\nu) B^k_{DS}(\nu) \quad  \mbox{ with }R\neq P, S\neq P. 
\]

There are two ways to take advantage of this picture of bulk geometry. The first one is to study correlations of geometric observables. The second, which leads to a new perspective, is a natural way to regularize matter couplings. The term in the action describing the coupling would be a sum over atoms (and over corner cells) where the discretized matter field and the bulk geometric picture are combined in an expression that approximates the integral of the continuum expression over the 4-simplex. Below we sketch two examples: a dust field and a Yang-Mills field. 

If we describe a dust field coupled to gravity using a unit length time-like vector field $U$ and a real valued function $\rho$ to describe its spacetime density relative to the volume element, the coupling term in the continuum is $S_{\mbox{dust}}= -\frac{1}{2} \int \rho \tilde{g}_{ab} U^a U^b$ %\cite{Brown-KucharOrDirac}. 
\cite{dust}. Its discretization is 
\[
S_{\mbox{dust}, \Delta}= -\frac{1}{2} \sum_{\nu , c<\nu} \rho(\nu) \tilde{g}_{RS}(c) U^R(\nu) U^S(\nu), 
\]
where $\rho(\nu) = \rho (x = C\nu)$, and $U(\nu)$ is defined as follows: 
$U(x)$ together with the volume form $\eta(x)$ induced by the metric determine a 3-form $u_{abc}(x) = U^d(x) \tilde{\eta}(x)$. 
Then this 3-form is integrated in each of the five hypersurfaces $\Sigma_P$ internal to $\nu$ and transversal to the internal edge $l_P$; 
and we define $U^P = \int_{\Sigma_P} u$. 

Instead of describing time-like dust coupled to gravity, we could provide a discretization of null dust coupled to gravity following the same techniques but starting from \cite{nulldust}. Studying the couplings to different types of matter could be a handle to begin the exploration of the microscopic origins of the macroscopic causal structure.

The Yang-Mills field could be coupled to gravity directly at the quantum level. 
Spin foam models for euclidean quenched compact QED and quantum $SU(2)$ Yang-Mills (on a hypercubic discretization) 
that follows from the heat-kernel action \cite{heat-kernel} is sketched in \cite{Rei2}. They are direct extensions of the 2d Yang-Mills model of Gross and Taylor \cite{Gross-Taylor}, and they are naturally written as a product of amplitudes associated to each atom. 
The spin foam model has the same structure as the model for $U(1)$ or $SU(2)$ BF theory (see \cite{Perez1}) but an extra factor of 
$e^{-\frac{e^2}{a^2} n(f)^2 \mbox{Area}(f)}$
or $e^{-\frac{g^2}{a^2} (j(f)(j(f)+1)) \mbox{Area}(f)}$ is associated to each face $f$ of the 2-complex colored with the irreducible representation $n(f)$ or $j(f)$ and with area $\mbox{Area}(f)$. 
In our model we can write expressions for the areas of the wedges, the intersection of the 2-complex with spacetime atoms. 
Thus, it is natural to suggest that with a quantum version of our model we would be able to assign a factor of the weight associated to a history of gravity and Yang-Mills field which expresses their coupling. The resulting path integral would be a spin foam model factorizable as a product over spacetime atoms.

Since one of the distinctive features of our model is that it assigns areas to faces of the 2-complex (and not only to faces of the dual cellular decomposition) and that feature is essential in the mentioned natural coupling to Yang-Mills, we consider that it may be considered as a possible advantage of our model. 
There is a very interesting model coupling the Barrett-Crane spin foam model to Yang-Mills \cite{Oriti-Pfeiffer}. 
It is not a spin foam model because coupling to the geometry provided by the Barrett-Crane model forces the Yang-Mills field to live in the triangulation and not in the dual cellular complex. This would not happen in our proposal.

\section{Summary and further considerations}% Yo
\label{Summary}

We presented a discrete model for gravity, the e-field model in Section \ref{Model}, and proved that its action converges in the continuum limit to Holst's action for general relativity.

In our model, there is a copy of Minkowski's space associated with each spacetime atom. When the gravitational field ($e$ or $B$) is not degenerate, one can use it to provide a relative reference frame in that copy of Minkowski's space. (For a expression of the spacetime metric written in terms of the tetrad field $e$ see Subsection \ref{discretization}, and for an expression written in terms of the $B$ field see Subsection \ref{GeometricityEPRL}.) 
The role played by the connection field is to provide a communication system connecting those local relative reference frames.

Striving to provide a new point of view for spin foam models for quantum gravity, we gave two versions of our model, which trade the discrete tetrad field $e$ for a discrete $B$ field (B-field models 1 and 2 in Section \ref{B}). 
%%%% Added in response to referee's Comment 1.
The $e$-model and the $B$-field models are defined even when the fields are degenerate, but our proof of classical equivalence holds in the no-degenerate regime only. 
%%%%%
With the first version, the plan is to follow an adapted version of the EPRL quantization \cite{EPRL}. This will be done first at the level of each corner cell of a 4-simplex, and then the pieces will be glued and projected to solve a quantum version of our G3 condition for constructing a variant of the Conrady-Hnybida extension \cite{Con-Hny} of the EPRL vertex. We expect that the intuitive geometric picture for the resulting quantum theory will be a twisted geometry where the 2-surfaces that are glued with a twist are internal to each 4-simplex (the 2-surfaces at the boundary of the intersection between two neighboring corner cells). 
The second version was constructed with the intention of convincing researchers that it can be used to construct an effective spin foam model along the lines proposed by Dittrich et al in \cite{ESFM1}. 
In our view, their use of area-angle Regge calculus holds them back from achieving even more efficient simulations. A classical model formulated over a Reisenberger discretization would be readily compatible with boundary spin network states and would have superior gluing properties. One option is to use the model presented in the article. 
A third quantization avenue would be a direct path integral quantization of our tetradic model by considering a 4-simplex $\nu$ with a fixed connection on its boundary (the $k_r$ for $r < \partial \nu$) and integrating out the bulk degrees of freedom, which are the bulk connection and the tetrad (the $h_l$ and the $e_l$ for $l < \nu$) using an appropriate measure. 
It is natural to ask how do these three quantization avenues compare to each other. This type of study was proposed in \cite{Engle-Thiemann-Han}, but since our model has an actual tetrad field at the discrete level, it may provide an interesting new perspective. 

In the previous section we mentioned two features of our model that could open new perspectives: 
First, the fact that our model uses spacetime atoms with curvature means that a quantum version of our model would test curvature degrees of freedom even at the level of a single spin foam vertex. 
Second, a distinctive ``locality property'' of our model is that it provides a picture of bulk geometry at the level of spacetime atoms that yields expressions for matter coupling in which the coupling term of the action splits as a sum over spacetime atoms.

%%%% Added in response to referee's Comment 2. 
Now we briefly address a central issue in the context of our work: why restricting our study to Holst's action. 

The action that we discretized \eqref{holst} is Holst's action for General Relativity, but in the same field space there are other actions which are also compatible with Einstein's equations. For instance, one can add topological terms to \eqref{holst}. For a discussion see \cite{AddingTopologicalTerms}. In the continuum, those terms do not affect the field equations. If we discretize those terms, however, even if we prove convergence of the action in the continuum limit, it is likely that they would affect the discrete field equations. A framework in which topological invariants of gauge fields can be accounted for in a lattice approach without taking the continuum limit is given in \cite{Zapata-Meneses}. 
Including a tetrad field or a $B$ field in that framework extending standard lattice fields would need some work, but the central idea is the same: include the relevant homotopy data. For a first study of sigma models see \cite{Dall'Olio-Zapata}.

Considering a more general action is also natural in the broader context of constructing a quantum field theory using Wilsonian renormalization to get rid of the ambiguities associated with a cutoff. See \cite{AddingTopologicalTerms} for a discussion. 

In spacetime regions with boundary and corners boundary terms do affect the geometro-algebraic structure of the classical theory and its quantization. This is another motivation for a future consideration of a more general action. For a discussion see \cite{ReviewCornerModesETC}. 
%%%%

\section*{Acknowledgements}
	We thank M. Reisenberger for fruitful discussions. 
	%Carlos E. Beltrán wishes to thank Reisenberger for his help in reproducing some of the calculations presented in \cite{Rei1}. 

	\medskip

\bibliographystyle{unsrt}
%\bibliography{ELFPartCircle}{}

\end{document}